\newtheorem{theorem}{Theorem}
\newtheorem{lemma}{Lemma}
\newtheorem{corollary}{Corollary}
\theoremstyle{definition}
\newtheorem{problem}{Problem}
\newtheorem{remark}{Remark}
\newtheorem{observation}{Observation}
\newcommand{\Prefix}{\mathsf{Prefix}}
\newcommand{\Substr}{\mathsf{Substr}}
\newcommand{\Suffix}{\mathsf{Suffix}}
\newcommand{\lspo}{\mathsf{lspo}}
\newcommand{\trie}{\mathsf{Trie}}
\newcommand{\str}{\mathsf{str}}
\newcommand{\flink}{\mathsf{flink}}
\newcommand{\ctrie}{\mathsf{ComTrie}}
\newcommand{\des}{\mathsf{des}}
\newcommand{\FLT}{\mathsf{FLTree}}
\newcommand{\AC}{\mathsf{AC}}
\newcommand{\rev}[1]{#1^R}
\newcommand{\DAWG}{\mathsf{DAWG}}
\newcommand{\EndPos}{\mathsf{End\_Pos}}
\newcommand{\Eqc}[1]{[{#1}]}
\newcommand{\Long}{\mathsf{long}}
\newcommand{\LPT}{\mathsf{LPTree}}
\newcommand{\slink}{\mathsf{slink}}
\newcommand{\SLT}{\mathsf{SLTree}}
\newcommand{\STree}{\mathsf{STree}}
\newcommand{\ind}{\mathsf{ind}}
\begin{document}

\title{All-Pairs Suffix-Prefix on Fully Dynamic Set of Strings}

\author[1]{Masaru Kikuchi}
\author[2]{Shunsuke~Inenaga}

\affil[1]{Department of Information Science and Technology
{\tt kikuchi.masaru.484@s.kyushu-u.ac.jp}}

\affil[2]{Department of Informatics, Kyushu University, Japan
{\tt inenaga.shunsuke.380@m.kyushu-u.ac.jp}}

\date{}
\maketitle

\begin{abstract}
The \emph{all-pairs suffix-prefix} (\emph{APSP}) problem
is a classical problem in string processing which has important applications in bioinformatics.
Given a set $\mathcal{S} = \{S_1, \ldots, S_k\}$ of $k$ strings,
the APSP problem asks one to compute the longest suffix of $S_i$ that is a prefix of $S_j$
for all $k^2$ ordered pairs $\langle S_i, S_j \rangle$ of strings in $\mathcal{S}$.
In this paper, we consider the \emph{dynamic} version of the APSP problem
that allows for insertions of new strings to the set of strings.
Our objective is, each time a new string $S_i$ arrives to
the current set $\mathcal{S}_{i-1} = \{S_1, \ldots, S_{i-1}\}$ of $i-1$ strings,
to compute (1) the longest suffix of $S_i$ that is a prefix of $S_j$
and (2) the longest prefix of $S_i$ that is a suffix of $S_j$ for all $1 \leq j \leq i$.
We propose an $O(n)$-space data structure
which computes (1) and (2) in $O(|S_i| \log \sigma + i)$ time for each new given string $S_i$,
where $n$ is the total length of the strings
and $\sigma$ is the alphabet size.
Further, we show how to extend our methods to the \emph{fully dynamic}
version of the APSP problem allowing for both insertions and deletions of strings.
\end{abstract}

\section{Introduction}

The \emph{all-pairs suffix-prefix} (\emph{APSP}) problem
is a classical problem in string processing which has important applications in bioinformatics,
since it is the first step of genome assembly~\cite{Gusfield1997}.
Given a set $\mathcal{S} = \{S_1, \ldots, S_k\}$ of $k$ strings,
the APSP problem asks one to compute the longest suffix of $S_i$ that is a prefix of $S_j$
for all $k^2$ ordered pairs $\langle S_i, S_j \rangle$ of strings in $\mathcal{S}$.

A straightforward solution, mentioned in~\cite{GusfieldLS92},
is to use a variant of the KMP pattern matching algorithm~\cite{KnuthMP77}
for each pair of two strings $S_i, S_j$ in $O(|S_i|+|S_j|)$ time.
This however leads to an inefficient $O(kn)$-time complexity,
where $n = \Vert \mathcal{S} \Vert$ is the total length of the strings in $\mathcal{S}$.

Gusfield et al.~\cite{GusfieldLS92} proposed the first efficient solution
to the APSP problem that takes $O(n)$ space,
which is based on the \emph{generalized suffix tree}~\cite{Blumer1987,GusfieldLS92} for the set $\mathcal{S}$ of strings.
After building the generalized suffix tree
in $O(n \log \sigma)$ time in the case of a general ordered alphabet of size $\sigma$~\cite{Weiner73}
or in $O(n)$ time in the case of an integer alphabets of size $\sigma = n^{O(1)}$~\cite{Farach-ColtonFM00},
Gusfield et al.'s algorithm~\cite{GusfieldLS92} works in $O(n + k^2)$ optimal time.
Ohlebusch and Gog~\cite{OhlebuschG10}
proposed an alternative algorithm
for solving the APSP problem with the same complexities as above,
using (enhanced) suffix arrays~\cite{ManberM93,AbouelhodaKO04}.
Tustumi et a.~\cite{TustumiGTL16} gave an improved suffix-array based algorithm
that is fast and memory efficient in practice.
All these approaches share the common concepts of building
the generalized suffix tree/array for $\mathcal{S}$.

Another data structure that can be used to solve the APSP problem
is the \emph{Aho-Corasick automata} (\emph{AC-automata})~\cite{Aho1975StringMatching}.
This is intuitive since
the AC-automaton is a generalization of the KMP-automaton for multiple strings.
The use of the AC-automaton for solving APSP was suggested by Lim and Park~\cite{LimP17},
but they did not give any details of an algorithm nor the complexity.
Recently, Loukides and Pissis~\cite{LoukidesP22} proposed an AC-automaton based algorithm
for the APSP problem.
After building the AC-automaton 
in $O(n \log \sigma)$ time in the case of a general ordered alphabet of size $\sigma$~\cite{Aho1975StringMatching}
or in $O(n)$ time in the case of an integer alphabets of size $\sigma = n^{O(1)}$~\cite{DoriL06},
the algorithm of Loukides and Pissis~\cite{LoukidesP22} runs in optimal $O(n+k^2)$ time.
Their algorithm can also solve a length-threshold version of the problem in optimal time.
Recently, Loukides et al.~\cite{LoukidesPTZ23} considered
a query-version of the APSP problem.
They presented a data structure of $O(n)$-space that can
report the longest suffix-prefix match between one string $S_i$
and all the other strings in $O(k)$ time.
Their data structure is based on the AC-automaton
and the micro-macro decomposition of trees~\cite{AlstrupHLT97}.

In this paper, we consider the \emph{dynamic} version of the APSP problem
that allows for insertions of new strings to the set of strings.
Our objective is, each time a new string $S_i$ arrives to
the current set $\mathcal{S}_i = \{S_1, \ldots, S_{i-1}\}$ of strings,
to compute the following longest suffix-prefix matches between a new string $S_i$
and all the other strings $S_1, \ldots, S_{i-1}$:
\begin{itemize}
\item the longest suffix of $S_i$ that is a prefix of $S_j$ for $1 \leq j \leq i$, and
\item the longest prefix of $S_i$ that is a suffix of $S_j$ for $1 \leq j \leq i$.
\end{itemize}
We propose an $O(n)$-space data structure based on the
\emph{directed acyclic word graph} (\emph{DAWG}) for a growing set of strings,
which is able to compute the suffix-prefix matches for a given new string $S_i$
in $O(|S_i| \log \sigma + i)$ time.
After iterating this for all $k$ strings $S_1, \ldots, S_k$ arriving to $\mathcal{S}$,
the total running time becomes $O(n \log \sigma + k^2)$,
which is as fast as the state-of-the-art algorithms~\cite{GusfieldLS92,LoukidesP22}
for a static set of strings
in the case of general ordered alphabets of size $\sigma$.
We note that the suffix-tree based algorithm of Gusfield et al.~\cite{GusfieldLS92} performs a DFS, and thus, does not seem to extend directly to the dynamic case.

Further, we consider the \emph{fully dynamic} APSP problem
which allows for both insertions and deletions of strings.
We show how to solve fully dynamic APSP 
in $O(|S_i|\log \sigma + k)$ amortized time when string $S_i$ is inserted or deleted, where $k$ is the number of strings in the current set $\mathcal{S}$ of strings.
In this fully dynamic version, we show how 
the DAWG-based algorithm can be simulated with the suffix tree in the case of insertions,
and how our suffix-tree based method can handle the case of deletions.
We emphasize that our approach is again quite different from
Gusfield et al.'s method~\cite{GusfieldLS92}.

Before describing our APSP algorithms for the (fully) dynamic case (Section~\ref{sec:dynamic} and Section~\ref{sec:fully_dynamic}),
we present a new AC-automata based APSP algorithm for a static set of strings (Section~\ref{sec:static}).
While the algorithm of Loukides and Pissis~\cite{LoukidesP22}
is based on a reduction of the APSP problem to the ULIT (union of labeled intervals on a tree) problem,
our new APSP algorithm does not use this reduction.
Instead, we use simple failure link traversals of the AC-automaton,
and marking operations on
a compact prefix trie that is a compact version of the AC-trie.
The simple nature of our algorithm makes it possible to
work also on the DAWG and on the suffix tree in the (fully) dynamic case.

The \emph{hierarchical overlap graphs} (\emph{HOGs}) and their variants
have a close relationship to the APSP problem.
There are AC-automata oriented approaches for
computing these graphs~\cite{CanovasCR17,Khan21,ParkPCPR21}.
We remark that all of their approaches are for the static case,
and do not seem to immediately extend to the dynamic setting.

A preliminary version of this paper appeared in~\cite{KikuchiI24}.
The new material in this full version is
the algorithm for the fully dynamic APSP problem (Section~\ref{sec:fully_dynamic}),
which was posed as an open question in the preliminary version~\cite{KikuchiI24}.

\section{Preliminaries} \label{sec:preliminaries}

\subsection{Strings}
Let $\Sigma$ denote an ordered \emph{alphabet} of size $\sigma$.
An element of $\Sigma^*$ is called a \emph{string}.
The length of a string $S \in \Sigma^*$ is denoted by $|S|$.
The \emph{empty string} $\varepsilon$ is the string of length $0$.
Let $\Sigma^+ = \Sigma^* \setminus \{\varepsilon\}$.
For string $S = xyz$, $x$, $y$, and $z$ are called
the \emph{prefix}, \emph{substring}, and \emph{suffix} of $S$,
respectively.
Let $\Prefix(S)$, $\Substr(S)$, and $\Suffix(S)$ denote
the sets of prefixes, substrings, and suffixes of $S$, respectively.
The elements of $\Prefix(S) \setminus \{S\}$,
$\Substr(S) \setminus \{S\}$, and $\Suffix(S) \setminus \{S\}$ are called
the \emph{proper prefixes}, \emph{proper substrings},
and \emph{proper suffixes} of $S$, respectively.
For a string $S$ of length $n$, $S[i]$ denotes the $i$-th symbol of $S$
and $S[i..j] = S[i] \cdots S[j]$ denotes the substring of $S$
that begins at position $i$ and ends at position $j$ for $1 \leq i \leq j \leq n$.
For convenience, let $S[i..j] = \varepsilon$ for $i > j$.
The \emph{reversed string} of a string $S$ is denoted by $\rev{S}$,
that is, $\rev{S} = S[|S|] \cdots S[1]$.

For a set $\mathcal{S} = \{S_1, \ldots, S_k\}$ of strings,
let $\Substr(\mathcal{S}) = \bigcup_{i=1}^k \Substr(S_i)$.
Let $\Vert \mathcal{S} \Vert = \sum_{i=1}^k |S_i|$ denote
the total length of the strings in $\mathcal{S} = \{S_1, \ldots, S_k\}$.


\subsection{All-Pairs Suffix-Prefix Overlap (APSP) Problems}

In this section, we introduce the problems that we tackle in this paper.

Let $\mathcal{S} = \{S_1, \ldots, S_k\}$ be a set of non-empty strings,
where $S_i \in \Sigma^+$ for $1 \leq i \leq k$.
For each ordered pair $\langle S_i, S_j \rangle$ of two strings in $\mathcal{S}$,
the longest string in the set $\Suffix(S_i) \cap \Prefix(S_j)$
is called the \emph{longest suffix-prefix overlap} of $S_i$ and $S_j$,
and is denoted by $\lspo(i,j)$.
We encode each $\lspo(i,j)$ by a tuple $(i,j,|\lspo(i,j)|)$ using $O(1)$ space.

\subsubsection{APSP Problems on Static Sets}

\begin{problem}[static-APSP] \label{prob:static}
The \emph{static all-pairs suffix-prefix overlaps} (\emph{static-APSP}) problem is,
given a static set $\mathcal{S} = \{S_1, \ldots, S_k\}$ of $k$ non-empty strings, to compute
\begin{equation*}
  \mathcal{L_S} = \{\lspo(i,j) \mid 1 \leq i \leq k, 1 \leq j \leq k\}.
\end{equation*}
\end{problem}
Note that the output size of Problem~\ref{prob:static} is $|\mathcal{L_S}| = \Theta(k^2)$
as we compute $\lspo(i,j)$ for all ordered pairs $\langle S_i,S_j \rangle$ of strings in $\mathcal{S}$.

\begin{problem}[length-bounded static-APSP] \label{prob:static_length}
The \emph{length-bounded} static-APSP problem is,
given a static set $\mathcal{S} = \{S_1, \ldots, S_k\}$ of $k$ non-empty strings
and an integer threshold $\ell \geq 0$, compute
\[
\mathcal{L_S^{\geq \ell}} = \{\lspo(i,j) \mid 1 \leq i \leq k, 1 \leq j \leq k, |\lspo(i,j)| \geq \ell \}.
\]
\end{problem}

\subsubsection{APSP on Dynamic Sets with Insertions}

We consider a \emph{dynamic} variant of APSP allowing for insertions of strings, named \emph{dynamic APSP}, that is defined as follows:

\begin{problem}[dynamic APSP] \label{prob:dynamic}
The \emph{dynamic} APSP problem is, each time a new string $S_{i}$ is inserted to the
current set $\mathcal{S} = \{S_1, \ldots, S_{i-1}\}$ of $i-1$ strings,
to compute the two following sets 
\begin{eqnarray*}
  \mathcal{F}_i & = & \{\lspo(i,j) \mid 1 \leq j \leq i\},  \\
  \mathcal{B}_i  & = & \{\lspo(j,i) \mid 1 \leq j \leq i\}.
\end{eqnarray*}
\end{problem}
We remark that $\mathcal{L_S} = \bigcup_{i=1}^k \left( \mathcal{F}_i \cup \mathcal{B}_i \right)$ holds for the set $\mathcal{S} = \{S_1, \ldots, S_k\}$ of $k$ strings.
Thus, a solution to Problem~\ref{prob:dynamic} is also a solution to Problem~\ref{prob:static}.

\begin{problem}[length-bounded dynamic APSP] \label{prob:dynamic_length}
Let $\ell > 0$ be a length threshold.
The \emph{dynamic}-APSP problem is, each time a new string $S_{i}$ is inserted to the
current set $\mathcal{S} = \{S_1, \ldots, S_{i-1}\}$ of $i-1$ strings,
to compute the two following sets 
\begin{eqnarray*}
  \mathcal{F}_i^{\geq \ell} & = & \{\lspo(i,j) \mid 1 \leq j \leq i, |\lspo(i,j)| \geq \ell\},  \\
  \mathcal{B}_i^{\geq \ell}  & = & \{\lspo(j,i) \mid 1 \leq j \leq i, |\lspo(j,i)| \geq \ell \}.
\end{eqnarray*}
\end{problem}
Analogously, $\mathcal{L_S}^{\geq \ell} = \bigcup_{i=1}^k \left( \mathcal{F}_i^{\geq \ell} \cup \mathcal{B}_i^{\geq \ell} \right)$ holds for the set $\mathcal{S} = \{S_1, \ldots, S_k\}$ of $k$ strings.
Thus, a solution to Problem~\ref{prob:dynamic_length} is also a solution to Problem~\ref{prob:static_length} as well.

\subsubsection{APSP on Fully Dynamic Sets with Insertions and Deletions}

We also consider a \emph{fully dynamic} variant of APSP allowing for insertions and deletions of strings, named \emph{fully dynamic APSP}, that is defined as follows:

\begin{problem}[fully dynamic APSP] \label{prob:dynamic}
In the \emph{fully dynamic} APSP problem, when a new string $S_{i}$ is added to the
current set $\mathcal{S}$ of strings, the goal is to compute the two following sets 
\begin{eqnarray*}
  \mathcal{F}_i & = & \{\lspo(i,j) \mid j \in \ind(\mathcal{S})\},  \\
  \mathcal{B}_i  & = & \{\lspo(j,i) \mid j \in \ind(\mathcal{S})\},
\end{eqnarray*}
and add them to the list $\mathcal{L}_{\mathcal{S}}$,
where $\ind(\mathcal{S})$ denotes the set of indices $h$ such that $S_h \in \mathcal{S}$.
When a string $S_i$ is deleted from the current set $\mathcal{S}$ of strings,
then the goal is to remove $\mathcal{F}_i$ and $\mathcal{B}_i$ from the list $\mathcal{L}_{\mathcal{S}}$.
\end{problem}

\begin{problem}[length-bounded fully dynamic APSP] \label{prob:dynamic_length}
Let $\ell > 0$ be a length threshold.
In the \emph{length-bounded fully dynamic} APSP problem, when a new string $S_{i}$ is inserted to the
current set $\mathcal{S}$ of strings,
to compute the two following sets 
\begin{eqnarray*}
  \mathcal{F}_i^{\geq \ell} & = & \{\lspo(i,j) \mid 1 \leq j \leq i, |\lspo(i,j)| \geq \ell\},  \\
  \mathcal{B}_i^{\geq \ell}  & = & \{\lspo(j,i) \mid 1 \leq j \leq i, |\lspo(j,i)| \geq \ell \},
\end{eqnarray*}
and add them to the list $\mathcal{L}_{\mathcal{S}}^{\geq \ell}$,
where $\ind(\mathcal{S})$ denotes the set of indices $h$ such that $S_h \in \mathcal{S}$.
When a string $S_i$ is deleted from the current set $\mathcal{S}$ of strings,
then the goal is to remove $\mathcal{F}_i^{\geq \ell}$ and $\mathcal{B}_i^{\geq \ell}$ from the list $\mathcal{L}_{\mathcal{S}}^{\geq \ell}$.
\end{problem}

\section{Tools}

In this section we list the data structures that are used as
building blocks of our static-APSP and dynamic-APSP algorithms.

\subsection{Tries and Compact Tries}

A \emph{trie} $\mathsf{T}$ is a rooted tree $(\mathsf{V}, \mathsf{E})$ such that 
\begin{itemize}
  \item each edge in $\mathsf{E}$ is labeled by a single character from $\Sigma$ and
  \item the character labels of the out-going edges of each node begin with distinct characters.
\end{itemize}

For a pair of nodes $u,v$ in a trie where $u$ is an ancestor of $v$,
let $\langle u, v \rangle$ denote the path from $u$ to $v$.
The \emph{path label} of $\langle u, v \rangle$
is the concatenation of the edge labels from $u$ to $v$.
The \emph{string label} of a node $v$ in a trie, denoted $\str(v)$,
is the path label from the root to $v$.
The \emph{string depth} of a node $v$ in a trie is the length
of the string label of $v$, that is $|\str(v)|$.

For a set $\mathcal{S} = \{S_1,\ldots, S_k\}$ of $k$ strings,
let $\trie(\mathcal{S})$ denote the trie that represents the strings in $\mathcal{S}$.
We assume that each of the $k$ nodes in $\trie(\mathcal{S})$
representing $S_i$ stores a unique integer $i$ called the \emph{id}.

We build and maintain a compacted trie $\ctrie(\mathcal{S})$ that is obtained from
$\trie(\mathcal{S})$ by keeping
\begin{itemize}
  \item all the leaves and the (possibly non-branching) internal nodes
    that store id's $1, \ldots, k$ for the $k$ strings $S_1, \ldots, S_k$, and
  \item the root and all the branching nodes.
\end{itemize}
All other other nodes, each having a single child,
are removed and their corresponding edges are contracted into paths.
For each node $v$ in $\trie(\mathcal{S})$,
let $\des(v)$ denote the shallowest descendant of $v$
that exists in $\ctrie(\mathcal{S})$.
Note that the number of nodes in $\ctrie(\mathcal{S})$ is $O(k)$.

\begin{figure}[t]
  \centering
  \includegraphics[scale=0.32]{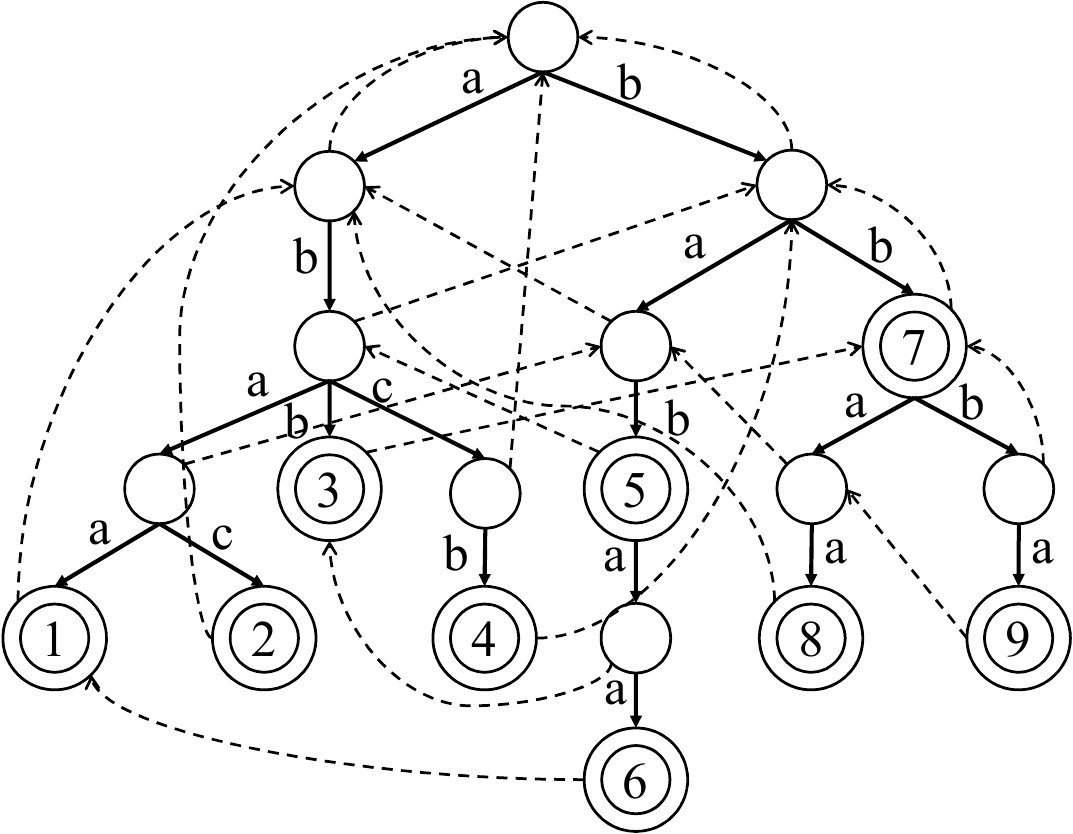}
  \hfill
  \includegraphics[scale=0.32]{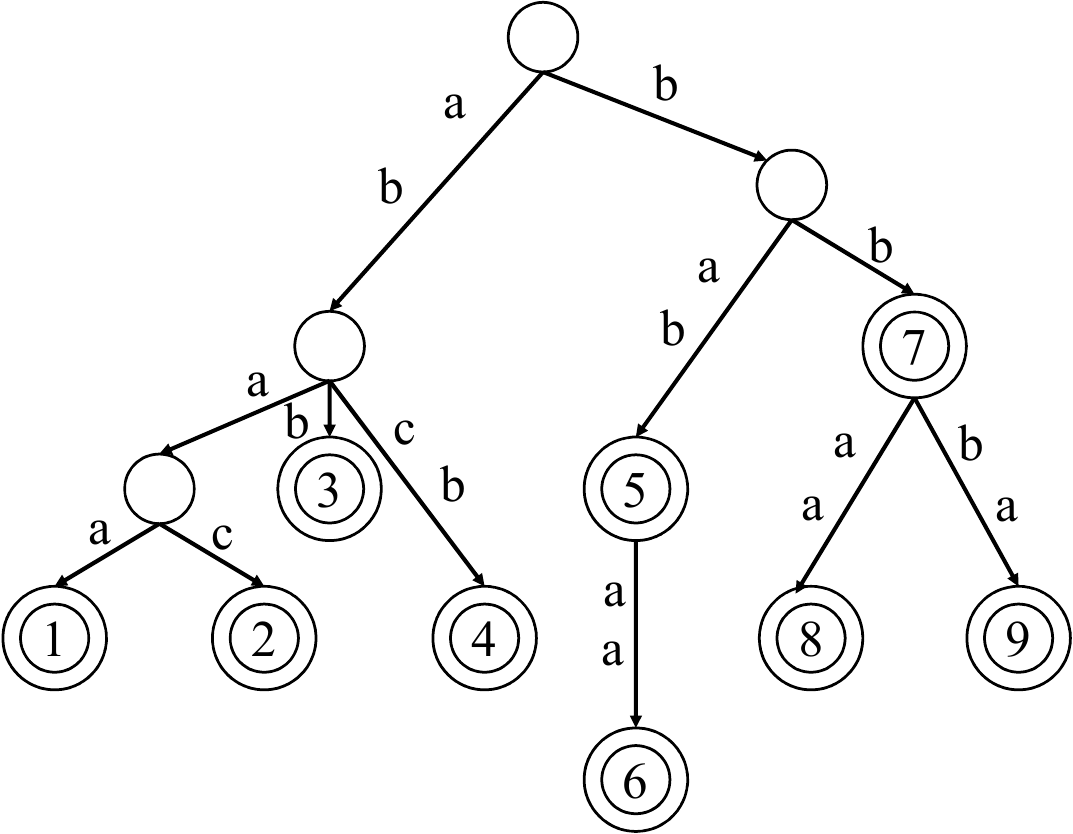}
  \caption{Illustrations of $\AC(\mathcal{S})$ (left) and $\ctrie(\mathcal{S})$ (right) for the set $\mathcal{S} = \{\mathrm{abaa, abac, abb, abcb, bab, babaa, bb, bbaa, bbba}\}$ of strings. 
    The bold solid arcs represent trie edges and the dashed arcs represent failure links.
  The nodes representing the strings in $\mathcal{S}$ are depicted by double-lined circles with the string id's.}
  \label{fig:AC}
\end{figure}

\subsection{Aho-Corasick Automata}

For each non-root node $v$ of $\trie(\mathcal{S})$,
define its \emph{failure link}
by $\flink(v) = u$ iff $\str(u)$ is the longest proper suffix of $\str(v)$
that can be spelled out from the root of $\trie(\mathcal{S})$.
In other words, $\str(u)$ is the longest prefix of some string(s) represented by $\trie(\mathcal{S})$ that is also a proper suffix of $\str(v)$.
The failure links also form a (reversed) tree, and let us denote it by $\FLT(\mathcal{S})$.

The \emph{Aho-Corasick automaton}~\cite{Aho1975StringMatching} for a set $\mathcal{S}$ of strings,
denoted $\AC(\mathcal{S})$, is a finite automaton with two kinds of transitions\footnote{We do not use the output function in our algorithms.}:
the goto function that is represented by the edges of the trie $\trie(\mathcal{S})$,
and the failure function that is represented by the edges of the reversed trie $\FLT(\mathcal{S})$.
It is clear that the total number of nodes and edges of $\AC(\mathcal{S})$
is $O(n)$, where $n = \Vert \mathcal{S} \Vert$.
See Fig.~\ref{fig:AC} for concrete examples of the AC-automaton
and its trie, compact trie, and failure links.

\begin{theorem}[\cite{DoriL06,Aho1975StringMatching}] \label{theo:AC_const}
  For a set $\mathcal{S}$ of strings of total length $n$,
  $\AC(\mathcal{S})$ can be built
  \begin{itemize}
  \item in $O(n)$ time and space for an integer alphabet of size $\sigma = n^{O(1)}$;
  \item in $O(n \log \sigma)$ time and $O(n)$ space
    for a general ordered alphabet of size $\sigma$.
  \end{itemize}
\end{theorem}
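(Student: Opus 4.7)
The plan is to construct $\AC(\mathcal{S})$ in two phases: first build the underlying trie $\trie(\mathcal{S})$, and then install failure links via a breadth-first traversal from the root. The space bound $O(n)$ is immediate from the observation that $\trie(\mathcal{S})$ has at most $n+1$ nodes and the failure function adds only one pointer per node.

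For the trie construction under a general ordered alphabet, I would insert each string $S_i$ character by character, representing the children of each node as a balanced binary search tree keyed on the first edge character so that a child lookup or insertion costs $O(\log \sigma)$. Inserting $S_i$ thus takes $O(|S_i| \log \sigma)$ time, giving $O(n \log \sigma)$ in total. For the integer alphabet case with $\sigma = n^{O(1)}$, a naive insertion cannot afford a per-character child lookup of this cost; instead, following Dori and Landau, I would gather the edges of the implicit trie globally, radix-sort them as pairs of (parent identifier, character), and then build each node's child structure by a single linear scan over the sorted list, using perfect hashing to provide $O(1)$ child access. This yields trie construction in $O(n)$ time.

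For the failure links I would process the nodes in BFS order. For each non-root node $v$ with parent $u$ and incoming edge character $c$, I would set $\flink(v)$ by walking failure links upward from $\flink(u)$ until I reach a node $w$ that has a $c$-child, and then take that $c$-child (falling back to the root if no such $w$ exists). A classical amortized argument, using the string depth as a potential function, bounds the total number of failure-link hops during the entire BFS by $O(n)$, since each hop strictly decreases the current depth while depth is only increased by $1$ when descending along a trie edge. Each hop incurs one child lookup, which is $O(\log \sigma)$ in the general ordered case and $O(1)$ in the integer case, matching the trie-construction cost.

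The main obstacle is the integer alphabet case: achieving overall $O(n)$ time forces us to avoid any per-node $O(\log \sigma)$ search during either phase, which means the alphabet-dependent work cannot be handled locally (as in a BST per node) but must be amortized through global radix sorting plus perfect hashing of children, as in the Dori–Landau construction. Once this hashing infrastructure is in place, the BFS for failure links goes through unchanged, and the two stated bounds follow.
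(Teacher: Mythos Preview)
The paper does not prove this theorem at all: it is stated as a known result and attributed to Aho--Corasick and Dori--Landau via the citations, with no argument given in the text. There is therefore nothing in the paper to compare your proposal against.

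That said, your sketch is essentially the standard construction behind those references and is correct in outline. One small imprecision worth tightening: the amortized bound on failure-link hops does not follow from a single global potential on the BFS, since a branching node $u$ may have several children $v_1,\ldots,v_m$ and each computation of $\flink(v_j)$ restarts from $\flink(u)$. The clean argument charges hops per root-to-leaf path: along the path for each $S_i$ the depth of the current failure target increases by at most $1$ per trie edge and decreases by at least $1$ per hop, so the hops along that path total $O(|S_i|)$; summing over all $k$ strings gives $O(n)$, and this overcounts (hence upper-bounds) the actual work at shared prefixes. With that fix, both stated bounds follow exactly as you describe.
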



\subsection{Directed acyclic word graphs (DAWGs)}


For a set $\mathcal{S}$ of strings,
let $\EndPos_{\mathcal{S}}(w)$ denote the set of pairs of
string ids and ending positions of all occurrences of a substring $w \in \Substr(\mathcal{S})$,
that is,
\[\EndPos_{\mathcal{S}}(w) = \{(i, j) \mid S_i[j-|w|+1..j] = w, 1 \leq i \leq k, 1 \leq j \leq |S_i|\}.
\]
We consider an equivalence relation $\equiv_{\mathcal{S}}$ of strings over $\Sigma$ w.r.t. $\mathcal{S}$ such that, for any two strings $w$ and $u$,
$w \equiv_{\mathcal{S}} u$ iff $\EndPos_{\mathcal{S}}(w) = \EndPos_{\mathcal{S}}(u)$.
For any string $x \in \Sigma^*$,
let $[x]_{\mathcal{S}}$ denote the equivalence class for $x$ w.r.t. $\equiv_{\mathcal{S}}$.

The DAWG of a set $\mathcal{S}$ of strings, denoted $\DAWG(\mathcal{S})$,
  is an edge-labeled DAG $(V, E)$ such that
  \begin{eqnarray*}
    V & = & \{\Eqc{x}_{\mathcal{S}} \mid x \in \Substr(\mathcal{S})\}, \\
    E & = & \{(\Eqc{x}_{\mathcal{S}}, b, \Eqc{xb}_{\mathcal{S}}) \mid x, xb \in \Substr(\mathcal{S}), b \in \Sigma\}.
  \end{eqnarray*}
  We also define the set $L$ of {\em suffix links} of $\DAWG(\mathcal{S})$
  by
  \[
    L = \{(\Eqc{ax}_{\mathcal{S}}, a, \Eqc{x}_{\mathcal{S}}) \mid x, ax \in \Substr(\mathcal{S}), a \in \Sigma, \Eqc{ax}_{\mathcal{S}} \neq \Eqc{x}_{\mathcal{S}} \}.
  \]
Namely, two substrings $x$ and $y$ in $\Substr(\mathcal{S})$ are
represented by the same node of $\DAWG(\mathcal{S})$
iff the ending positions of $x$ and $y$ in the strings of $\mathcal{S}$ are equal.

See Fig.~\ref{fig:DAWG} and~\ref{fig:DAWG_suffixlinks} for
a concrete example of the DAWG and its suffix links.

\begin{figure}[h!]
  \centering
  \includegraphics[scale=0.32]{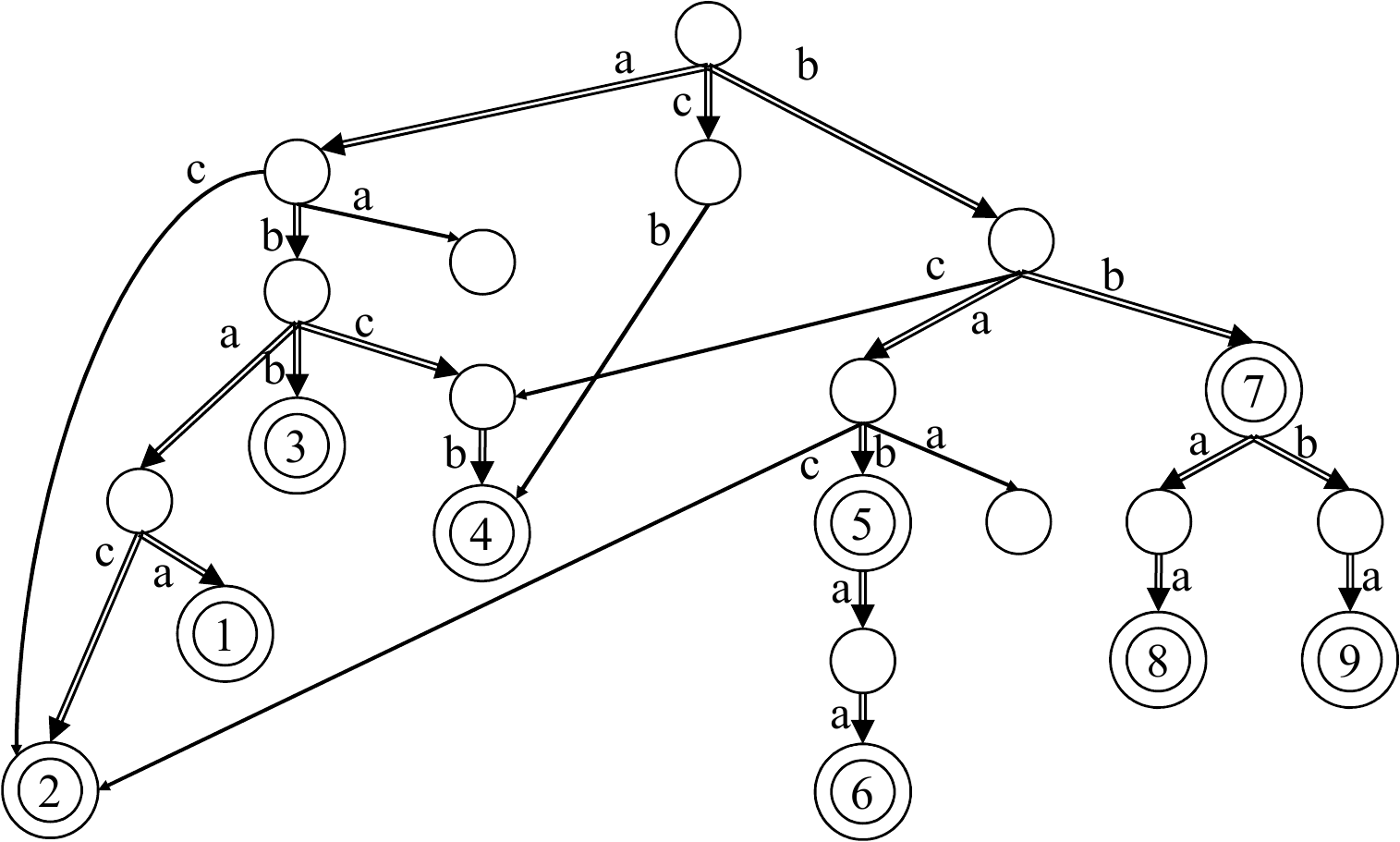}
  \caption{$\DAWG(\mathcal{S})$ for the same set $\mathcal{S} = \{\mathrm{abaa, abac, abb, abcb, bab, babaa, bb, bbaa, bbba}\}$ of strings as in Fig.~\ref{fig:AC}.
  The induced tree consisting only of the double-lined arcs is $\trie(\mathcal{S})$.}
  \label{fig:DAWG}
\end{figure}

\begin{figure}[h!]
  \centering
  \includegraphics[scale=0.32]{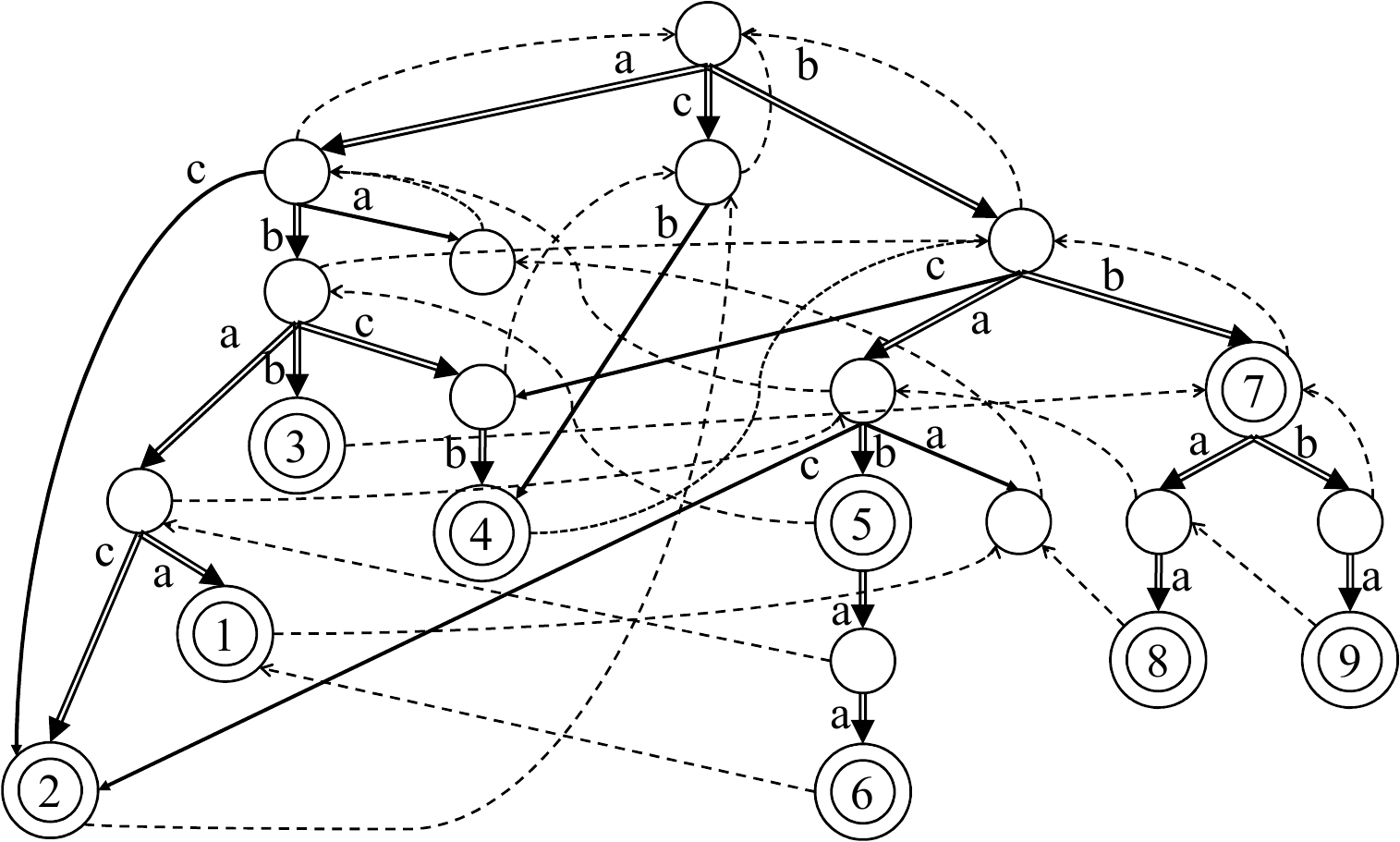}
  \caption{Illustration of the suffix links of $\DAWG(\mathcal{S})$ for the same set $\mathcal{S}$ of strings as in Fig~\ref{fig:DAWG}.}
  \label{fig:DAWG_suffixlinks}
\end{figure}

For a node $v \in V$ of $\DAWG(\mathcal{S})$,
let $\Long(v)$ denote the longest string represented by $v$
(i.e., $|\Long(v)|$ is the length of the longest path from the source to $v$).
Also, let $\slink(v) = u$ denote the suffix link from node $u$ to node $v$.

An edge $(u, a, v) \in E$ of $\DAWG(\mathcal{S})$
is called an \emph{primary edge}
if $|\Long(u)|+1 = |\Long(v)|$,
and it is called a \emph{secondary edge} otherwise ($|\Long(u)|+1 < |\Long(v)|$).
The primary edges form a spanning tree of $\DAWG(\mathcal{S})$
which consists of the longest paths from the source to all the nodes,
and let us denote this spanning tree by $\LPT(\mathcal{S})$.
By the definition of the equivalence class $\Eqc{\cdot}_{\mathcal{S}}$,
for each $S_i \in \mathcal{S}$ it holds that $ S_i= \Long(\Eqc{S_i}_{\mathcal{S}})$. Thus we have the following:
\begin{observation} \label{obs:LPT}
  For each $S_i \in \mathcal{S}$,
  the path that spells out $S_i$ from the source of $\DAWG(\mathcal{S})$
  is a path of $\LPT(\mathcal{S})$.
\end{observation}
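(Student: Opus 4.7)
My plan is to show that every edge on the path that spells out $S_i$ from the source of $\DAWG(\mathcal{S})$ is a primary edge; the claim then follows immediately, since $\LPT(\mathcal{S})$ is exactly the spanning tree formed by primary edges. Let $v_0, v_1, \ldots, v_{|S_i|}$ denote the sequence of DAWG nodes visited along this path, so that $v_0$ is the source and $v_j = [S_i[1..j]]_{\mathcal{S}}$ for each $0 \le j \le |S_i|$. I need to verify that $|\Long(v_{j-1})| + 1 = |\Long(v_j)|$ for every $1 \le j \le |S_i|$.

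The heart of the argument is the claim that $|\Long(v_j)| = j$ for every $j$. One direction is immediate: since $S_i[1..j] \in v_j$, we have $|\Long(v_j)| \ge j$. For the reverse inequality, I will exploit the fact that $(i,j) \in \EndPos_{\mathcal{S}}(S_i[1..j])$, because $S_i[1..j]$ occurs in $S_i$ ending at position $j$. Any string $w$ equivalent to $S_i[1..j]$ under $\equiv_{\mathcal{S}}$ must have the same end-position set, so in particular $(i,j) \in \EndPos_{\mathcal{S}}(w)$, meaning $w = S_i[j-|w|+1..j]$. For this substring to be well-defined we must have $|w| \le j$, and thus no representative of $v_j$ exceeds length $j$. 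This pins down $|\Long(v_j)| = j$.

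Given this, each consecutive pair satisfies $|\Long(v_{j-1})|+1 = (j-1)+1 = j = |\Long(v_j)|$, so the edge $(v_{j-1}, S_i[j], v_j)$ is primary by definition. Hence the entire path spelling $S_i$ lies in $\LPT(\mathcal{S})$. The one subtlety worth double-checking is that the path is indeed uniquely determined: this is fine because DAWG edges are labeled by distinct characters at each node, so following $S_i[1], S_i[2], \ldots$ from the source yields a well-defined sequence of nodes, and the terminal node is $[S_i]_{\mathcal{S}}$ by the definition of the equivalence classes.

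I do not anticipate a significant obstacle; the only step requiring care is the end-position argument bounding $|\Long(v_j)|$ from above, which hinges on noticing that the occurrence of $S_i[1..j]$ at position $j$ of $S_i$ is inherited by every equivalent string and forces its length to be at most $j$.
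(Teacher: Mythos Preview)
Your argument is correct and is essentially a fully spelled-out version of the paper's one-line justification: the paper simply notes that, by the definition of the equivalence classes, $S_i = \Long([S_i]_{\mathcal{S}})$ and invokes the fact that $\LPT(\mathcal{S})$ consists of the longest paths from the source, whereas you carry out the same $\EndPos$ reasoning at every prefix $S_i[1..j]$ to conclude $|\Long(v_j)|=j$ directly. Both arguments rest on the same idea, and yours supplies the details the paper leaves implicit.
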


$\DAWG(\mathcal{S})$ is a partial DFA of size $O(n)$ that accepts $\Substr(\mathcal{S})$~\cite{Blumer1985,Blumer1987}.
The following lemma permits us to efficiently update the DAWG
each time a new string is inserted to the current set of strings:
\begin{theorem}[\cite{Blumer1987}]\label{theo:DAWG_update}
  Suppose that $\DAWG(\mathcal{S}_{i-1})$ has been built
  for a set $\mathcal{S}_{i-1} = \{S_1, \ldots, S_{i-1}\}$ of $i-1$ strings.
  Given a new string $S_i$ to insert,
  one can update $\DAWG(\mathcal{S}_{i-1})$ to $\DAWG(\mathcal{S}_{i})$
  in $O(|S_i|\log \sigma)$ time.
\end{theorem}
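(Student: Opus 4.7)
My plan is to extend the classical online DAWG construction of Blumer et al.\ for a single string to the incremental multi-string setting, by feeding the characters of $S_i$ into $\DAWG(\mathcal{S}_{i-1})$ one at a time, starting from the source. Throughout the scan I would maintain a current \emph{sink} node $z$ representing the longest prefix of $S_i$ read so far, together with its suffix-link chain, which captures all suffixes of this prefix.

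For each new character $a$ of $S_i$, the update proceeds as follows. First create a fresh sink node $z'$ and walk up the suffix-link chain from $z$. For every visited node $v$ that currently has no outgoing $a$-edge, install a new edge $(v,a,z')$; this reflects the fact that $\str(v)a$ now occurs at position $|z'|$ of $S_i$, adding $(i,|z'|)$ to $\EndPos_{\mathcal{S}_i}(\str(v)a)$. The walk stops at the first ancestor $v^\ast$ that already has an outgoing $a$-edge to some node $w$. If that edge is primary, i.e.\ $|\Long(w)| = |\Long(v^\ast)|+1$, set $\slink(z') = w$ directly. Otherwise we \emph{split} $w$: create a clone $w'$ with $\Long(w') = |\Long(v^\ast)|+1$, copy the outgoing edges of $w$ to $w'$, reroute every $a$-edge with target $w$ whose source lies on the just-traversed suffix-link chain so that it targets $w'$, and set $\slink(w') = \slink(w)$, $\slink(w) = w'$, $\slink(z') = w'$.

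Correctness reduces to checking that the equivalence classes under $\equiv_{\mathcal{S}_i}$ are correctly reflected by the updated automaton. Only the classes of suffixes of the currently read prefix of $S_i$ can change, and these correspond exactly to the nodes on the suffix-link chain of $z$; the split separates substrings that previously shared an $\EndPos$ set but now differ because of the new end position in $S_i$. Since out-edges at each node are stored in an alphabet-ordered balanced BST, each primitive operation costs $O(\log \sigma)$.

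The hard part is bounding the total number of primitive operations by $O(|S_i|)$. Here I would use the standard amortized argument with the potential $\Phi(z) = |\Long(z)| - |\Long(\slink(z))|$: each suffix-link step along the update strictly decreases $\Phi$ by at least one, while extending the sink by one character raises $\Phi$ by at most one, so the suffix-link walks and edge installations telescope to $O(|S_i|)$ over the whole scan. The subtlety lies in node cloning, which modifies the suffix-link chain itself; I would charge the cost of a split and the rerouted edges to the potential drop at the suffix-link step that triggers it, so each clone contributes only $O(\log \sigma)$ amortized time. Combining these pieces yields the claimed $O(|S_i|\log \sigma)$ update bound.
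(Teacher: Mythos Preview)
The paper does not prove this statement; it is cited from Blumer et al.\ and used as a black box in Section~\ref{sec:dynamic}. Your sketch is precisely the incremental construction from that reference, so there is no ``paper's proof'' to compare against, and at the level of the algorithm your description is accurate.

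Two places in your amortization would not survive scrutiny as written, however. First, the potential $\Phi(z)=|\Long(z)|-|\Long(\slink(z))|$ does not move the way you claim: if reading one character triggers $s$ suffix-link hops before reaching $v^\ast$, then $\Phi(z')-\Phi(z)=|\Long(\slink(z))|-|\Long(v^\ast)|\ge s-1$, so $\Phi$ \emph{increases} along the walk rather than decreases. The bound still closes, but by the opposite mechanism: $\Phi$ is always at most $|S_i|$, hence $\sum_p(s_p-1)\le|S_i|$; equivalently, take $|\Long(\slink(z))|$ as the potential and argue that it rises by at most a constant per character. Second, you copy all outgoing edges of $w$ to the clone $w'$ but never pay for this, and it is not $O(1)$ per clone nor covered by $\Phi$. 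The clean fix is global: edges are only created, never deleted, during an update, so the number of edge creations while inserting $S_i$ equals the net growth $|E(\DAWG(\mathcal{S}_i))|-|E(\DAWG(\mathcal{S}_{i-1}))|=O(|S_i|)$ by the linear-size bound on DAWGs. The edge \emph{redirections} during cloning are a separate suffix-link walk and need their own telescoping argument; ``charge to the potential drop that triggers the split'' is too vague as stated, since that drop has already been spent on the first walk.
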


\subsection{Suffix trees}

The \emph{suffix tree} for a set $\mathcal{S} = \{S_1, \ldots, S_k\}$ of strings,
denoted $\STree(\mathcal{S})$, is a compacted trie such that
\begin{itemize}
\item each edge is labeled by a non-empty substring in $\Substr(\mathcal{S})$;
\item the labels of the out-edges of each node begin with distinct characters;
\item every internal node is branching (i.e. has two or more children) or represents a suffix in $\Suffix(\mathcal{S})$;
\item every suffix in $\Suffix(\mathcal{S})$ is represented by a node.
\end{itemize}
For ease of explanations,
we identify each node of $\STree(\mathcal{S})$ with the string that the node represents.

The \emph{suffix link} of a node $v$ in $\STree(\mathcal{S})$ points to the node that represents the string $\str(v)[2..|\str(v)|]$.

See Fig.~\ref{fig:STree} and~\ref{fig:STree_suffixlinks} for
a concrete example of the suffix tree and its suffix links.

\begin{figure}[h!]
  \centering
  \includegraphics[scale=0.32]{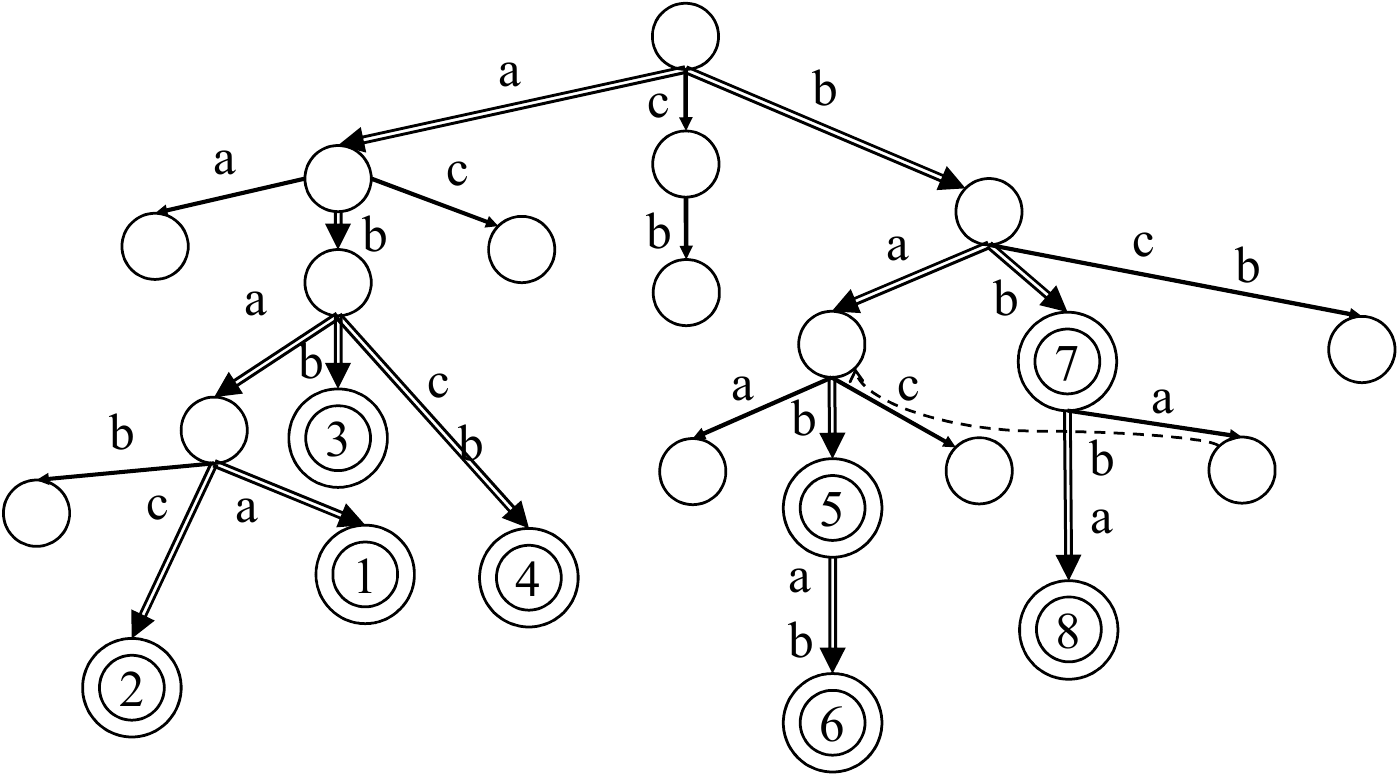}
  \caption{$\STree(\mathcal{S})$ for the same set $\mathcal{S} = \{\mathrm{abaa, abac, abb, abcb, bab, babaa, bb, bbaa, bbba}\}$ of strings as in Fig.~\ref{fig:AC}.
  The induced tree consisting only of the double-lined arcs is a compacted version of $\trie(\mathcal{S})$.}
  \label{fig:STree}
\end{figure}

\begin{figure}[h!]
  \centering
  \includegraphics[scale=0.32]{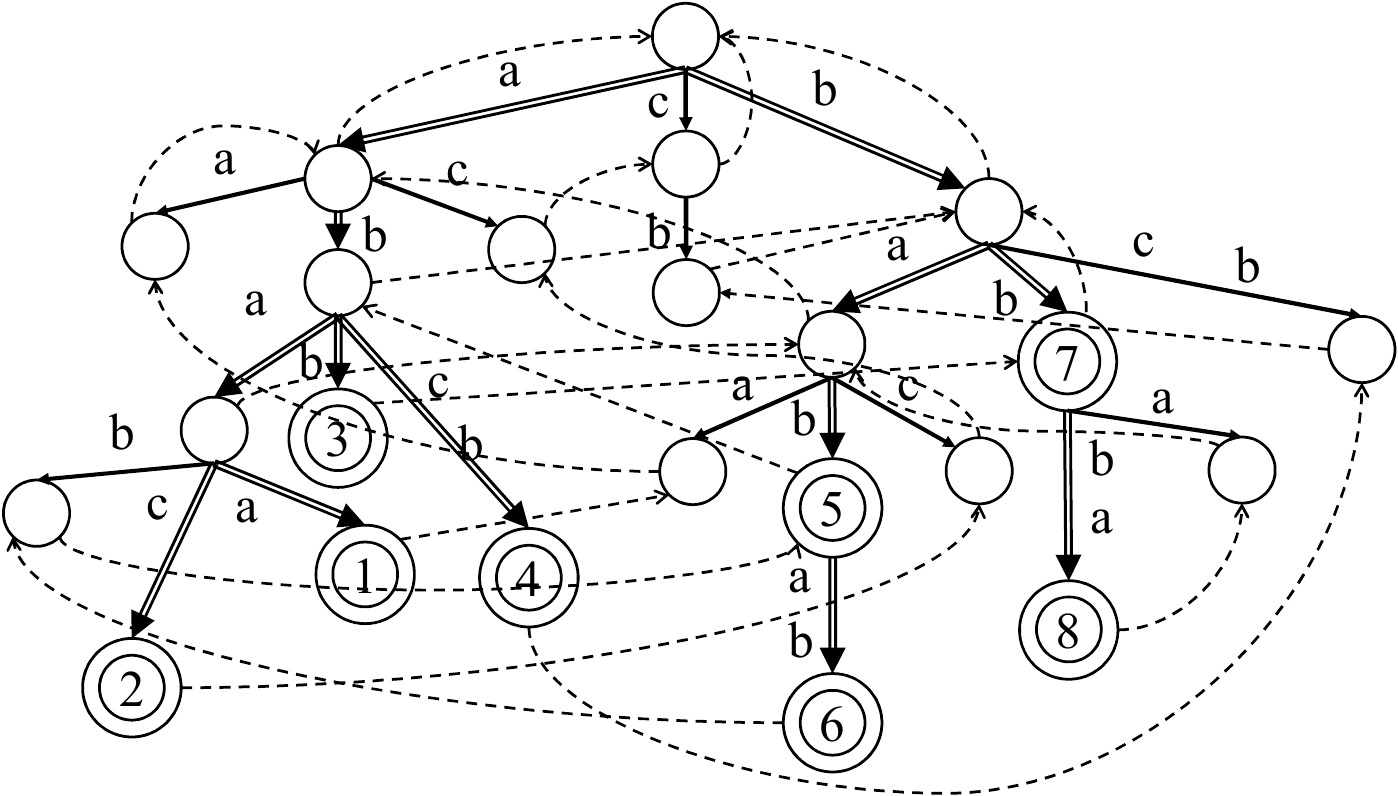}
  \caption{Illustration of the suffix links of $\STree(\mathcal{S})$ for the same set $\mathcal{S}$ of strings as in Fig~\ref{fig:STree}.}
  \label{fig:STree_suffixlinks}
\end{figure}

Each edge label $x$ of $\STree(\mathcal{S})$ is represented by
a tuple $(i,b,e)$ such that $x = S_i[b..e]$.
This way $\STree(\mathcal{S})$ can be stored in $O(n)$ space,
where $n = \Vert S \Vert$

Our version of suffix trees is the so-called \emph{Weiner tree}~\cite{Weiner73}
in which the repeated suffixes with unique right-extensions are represented by non-branching internal nodes
(For example, string $\mathrm{bab} \in \Suffix(\mathcal{S})$ in Fig.~\ref{fig:STree} is represented by a non-branching internal node).
This version of the suffix tree can easily be obtained by first building
the suffix tree for strings $S_1 \$_1, \ldots, S_k \$_k$ with
unique end-markers $\$_1, \ldots, \$_k$~($\$_i \neq \$_j$) and then
by removing all edges labeled by the end-markers $\$_1, \ldots, \$_k$.

By adapting Weiner's construction~\cite{Weiner73} or Ukkonen's construction~\cite{Ukkonen95} both working in online manners,
one can update the suffix tree efficiently when a new string is inserted to
the current set of strings.
This setting is called as the \emph{semi-dynamic} online construction
of suffix trees in the literature~\cite{TakagiIABH20}.

\begin{theorem}[\cite{Weiner73,Ukkonen95,TakagiIABH20}]\label{theo:STree_insert}
  Suppose that $\STree(\mathcal{S}_{i-1})$ has been built
  for a set $\mathcal{S}_{i-1} = \{S_1, \ldots, S_{i-1}\}$ of $i-1$ strings.
  Given a new string $S_i$ to insert,
  one can update $\STree(\mathcal{S}_{i-1})$ to $\STree(\mathcal{S}_{i})$
  in $O(|S_i|\log \sigma)$ time.
\end{theorem}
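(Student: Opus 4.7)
The plan is to adapt Ukkonen's online suffix tree construction, which processes its input one character at a time and is therefore the natural candidate for a semi-dynamic extension to a growing set of strings. Concretely, after $\STree(\mathcal{S}_{i-1})$ has been built, I would append a fresh unique end-marker $\$_i$ to $S_i$, reset Ukkonen's active point to the root of $\STree(\mathcal{S}_{i-1})$, and then feed the characters of $S_i\$_i$ into the construction one by one, letting the existing generalized suffix tree play the role of Ukkonen's partially built tree. After the whole of $S_i\$_i$ has been read, I would strip off the $\$_i$-labeled edges to recover the Weiner-style variant used in the preliminaries, retaining any non-branching internal nodes induced by repeated suffixes of $\mathcal{S}_i$.

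For each new character, I would perform the standard Ukkonen extension at the current active point: for every suffix of the prefix read so far that is not yet explicitly represented, either extend an existing leaf implicitly, walk down to the appropriate locus, or split an edge to introduce a new internal node and a new leaf. Suffix links let us move from one suffix locus to the next without restarting from the root, and each node's children are stored in a balanced binary search tree keyed by first character so that any goto or child lookup costs $O(\log \sigma)$.

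The running time would then follow from the standard amortized analysis of Ukkonen's algorithm, using the string depth of the active point as the potential: each character read increases the potential by at most one, while each suffix-link traversal or skip/count step decreases it by at least one, so over the $|S_i|+1$ characters of $S_i\$_i$ the total number of elementary operations is $O(|S_i|)$. Each such operation performs at most one balanced-BST lookup of cost $O(\log \sigma)$, which yields the claimed $O(|S_i|\log \sigma)$ bound.

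The main obstacle, and the reason \cite{TakagiIABH20} is cited alongside the classical Weiner and Ukkonen works, is to verify that this per-string analysis still goes through when the active point is reset between successive strings, and that an already existing $\STree(\mathcal{S}_{i-1})$ does not force extra work during the insertion of $S_i$. This is handled by noting that the potential at the end of one insertion is nonnegative, so setting it back to zero for the next string only helps the amortization; equivalently, one performs a fresh per-string analysis, since the work done while inserting $S_i$ depends only on $|S_i|$ and on $O(\log \sigma)$-cost child lookups, and not on the size or shape of $\STree(\mathcal{S}_{i-1})$.
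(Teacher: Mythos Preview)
The paper does not give its own proof of this theorem: it is stated as a known result and attributed to \cite{Weiner73,Ukkonen95,TakagiIABH20}, with the surrounding text only remarking that ``by adapting Weiner's construction or Ukkonen's construction both working in online manners, one can update the suffix tree efficiently when a new string is inserted.'' Your proposal is a faithful expansion of exactly that cited argument --- Ukkonen's online algorithm restarted at the root for each new string, with unique end-markers and balanced BSTs on the children --- and the per-string amortized analysis you sketch is the standard one, so there is no discrepancy to report.
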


\section{Algorithm for Static-APSP} \label{sec:static}

In this section, we present a new, very simple algorithm for the static-APSP problem
for a static set $\mathcal{S} = \{S_1, \ldots, S_k\}$
of $k$ strings of total length $\Vert \mathcal{S} \Vert = n$.

Our main data structure is the AC-automaton $\AC(\mathcal{S})$.
%
For each string $S_i \in \mathcal{S}$ (in an arbitrary order),
set node $v$ to be the node that represents $S_i$,
i.e. initially $\str(v) = S_i$.
All the nodes of $\ctrie(\mathcal{S})$ are initially unmarked.
Our algorithm is shown in Algorithm 1.
See also Fig.~\ref{fig:traversal} for illustration.

\begin{itembox}[c]{Algorithm 1: Compute $\lspo(i,j)$ for all $1 \leq j \leq k$}
\begin{enumerate}
\item[(1)] Set $v$ to be the node that represents $S_i$ (i.e. $\str(v) = S_i$).
\item[(2)]
  \begin{itemize}
  \item Set $u \leftarrow \des(v)$.
  \item Traverse the \emph{induced} subtree $\mathsf{T}_u$ of $\ctrie(\mathcal{S})$ that is rooted at $u$, and consists only of unmarked nodes (and thus we do not traverse the subtrees under the marked nodes in $T_u$).
  \item Report $\lspo(i,j) = |\str(v)|$ for all id's $j$ found in the induced subtree $\mathsf{T}_u$.
  \item Mark node $u$ on $\ctrie(\mathcal{S})$.
  \end{itemize}
\item[(3)] If $v$ is the root, finish the procedure for the given string $S_i$. Unmark all marked nodes.
\item[(4)] If $v$ is not the root, take the failure link from $v$ on $\trie(\mathcal{S})$, and set $v \leftarrow \flink(v)$. Continue Step (2).
\end{enumerate}
\end{itembox}

\begin{figure}[h]
  \centering
  \includegraphics[scale=0.4]{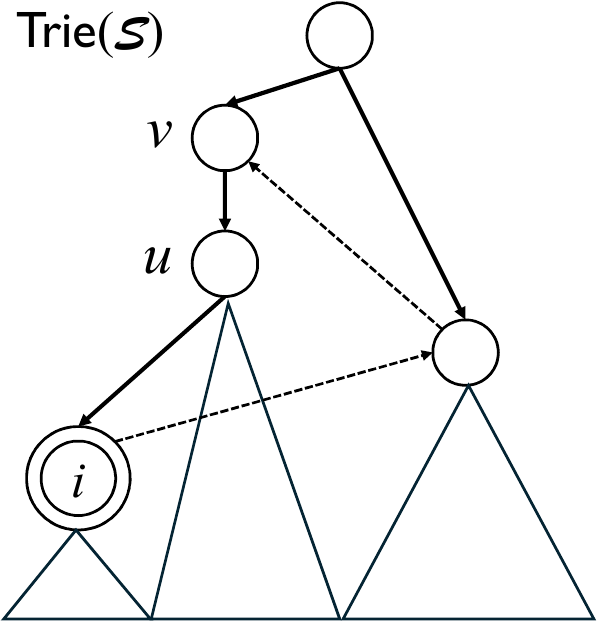}
  \hfil
  \includegraphics[scale=0.4]{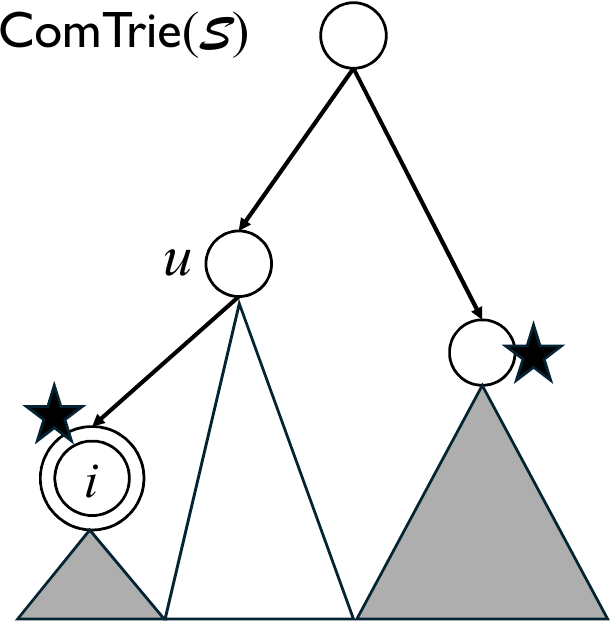}
  \caption{Illustration for Algorithm 1 that solves the static-APSP problem, in which we have climbed up the failure links from the node with id $i$ to node $v$. We access $u = \des(v)$ in $\ctrie(\mathcal{S})$, and traverse only the white subtree below $u$.}
  \label{fig:traversal}
\end{figure}

\begin{theorem} \label{theo:static}
  For a static set $\mathcal{S} = \{S_1, \ldots, S_k\}$ of $k$ strings
  of total length $n$,
  our algorithm solves the static-APSP problem with $O(n)$ working space and 
  \begin{itemize}
  \item in $O(n + k^2)$ time for an integer alphabet of size $\sigma = n^{O(1)}$;
  \item in $O(n \log \sigma + k^2)$ time for a general alphabet of size $\sigma$.
  \end{itemize}
\end{theorem}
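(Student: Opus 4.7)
The plan is to verify correctness first, then bound the running time, and finally confirm the space bound.

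For correctness, I would argue that iterating failure links from the node representing $S_i$ enumerates exactly the distinct suffixes of $S_i$ that are prefixes of some string in $\mathcal{S}$, in strictly decreasing length, terminating at the root which represents $\varepsilon$. For each visited $v$ on this chain, a string $S_j$ has $\str(v)$ as a prefix iff the id-node of $S_j$ lies in the subtree of $v$ in $\trie(\mathcal{S})$, equivalently in the subtree of $u = \des(v)$ in $\ctrie(\mathcal{S})$. Processing the chain from longest to shortest $|\str(v)|$ and marking $u$ after each traversal guarantees that the first time an id-node of $S_j$ is encountered corresponds to the longest suffix of $S_i$ that is a prefix of $S_j$; subsequent iterations cannot revisit that id-node because it sits under a marked ancestor in $\ctrie(\mathcal{S})$. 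The final iteration, with $v$ at the root, covers all ids $j$ for which $\lspo(i,j) = \varepsilon$.

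For the running time, I plan to show that Algorithm 1 takes $O(|S_i| + k)$ time per string $S_i$, so the total over all strings is $O(n + k^2)$, to which we add the AC-automaton construction cost from Theorem~\ref{theo:AC_const}. The number of failure-link hops for $S_i$ is at most $|S_i|$, and each hop contributes $O(1)$ work outside the subtree traversal once $\des(v)$ is precomputed and stored at every node of $\trie(\mathcal{S})$; this precomputation can be done during the construction of $\ctrie(\mathcal{S})$ in $O(n)$ time by a single bottom-up sweep. The crucial observation is that the subtree traversals executed for a fixed $S_i$ visit pairwise disjoint sets of nodes of $\ctrie(\mathcal{S})$: if $v'$ follows $v$ in the failure-link chain then $v'$ cannot be a descendant of $v$ in $\trie(\mathcal{S})$, since $|\str(v')| < |\str(v)|$, and an easy case analysis shows that $\des(v')$ cannot be a proper descendant of $\des(v)$; the remaining cases, where $\des(v')$ is either incomparable to $\des(v)$ or a proper ancestor of it, yield disjoint traversed regions thanks to the marking step. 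Hence all subtree work for $S_i$ is bounded by $|\ctrie(\mathcal{S})| = O(k)$.

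For space, $\AC(\mathcal{S})$ and $\ctrie(\mathcal{S})$ together use $O(n)$, and one mark bit per $\ctrie$ node adds $O(k)$; since all marks are cleared at the end of each outer iteration there is no accumulation. The step I expect to need the most care is the disjointness argument for subtree traversals, in particular the case where $\flink(v)$ happens to lie on the root-to-$v$ path in $\trie(\mathcal{S})$, so that $\des(\flink(v))$ may coincide with or be a strict ancestor of $\des(v)$; I would handle this by observing that the marked-node pruning collapses any repeated root into an empty traversal, so each $\ctrie$ node is charged at most once per iteration of $S_i$.
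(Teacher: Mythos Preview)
Your proposal is correct and follows essentially the same approach as the paper: correctness via the failure-link chain enumerating suffixes of $S_i$ in decreasing length, and the $O(|S_i|+k)$ per-string bound via the marking-and-pruning argument on $\ctrie(\mathcal{S})$, with $\AC(\mathcal{S})$, $\ctrie(\mathcal{S})$, and $\des(\cdot)$ all built in the stated preprocessing time. If anything, your treatment of the key step is more explicit than the paper's: you spell out why $\des(v')$ cannot be a proper descendant of $\des(v)$ when $v'$ follows $v$ on the failure-link chain, whereas the paper simply asserts that the deeper of two comparable traversal roots ``was marked before'' the shallower one, which tacitly relies on exactly this fact.
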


\begin{proof}
  The correctness of our algorithm follows from the definitions of the $\lspo(\cdot,\cdot)$ function and the failure links of $\AC(\mathcal{S})$,
  and from our failure link traversal starting from the node $v$ representing $S_i$, in decreasing order of the node depths towards the root $r$.
  
  For each string $S_i$, Step (1) clearly takes $O(1)$ time.
  The total number $l$ of nodes $v, \flink(v), \flink^2(v), \ldots, \flink^{l-1}(v) = r$ visited in the chain of failure links starting from node $v$ with $\str(v) = S_i$ is at most $|S_i|+1$.
  The size of $\ctrie(\mathcal{S})$ is $O(k)$.
  As we process the nodes $v, \flink(v), \flink^2(v), \ldots, \flink^{l-1}(v) = r$ in decreasing order of their depths,
  any marked node is visited at most once.
  This is because, if a marked node $w$ is visited twice,
  then there must be two distinct ancestors $p,q$ of $w$ in $\ctrie(\mathcal{S})$, such that $p$ is an ancestor of $q$.
  This however is a contradiction,
  since $q$ was marked before $p$,
  and our algorithm does not visit the subtree under $q$
  when processing the induced tree $\mathsf{T}_{p}$.
  Therefore, each node is marked at most once,
  and each marked node is visited at most once.
  Thus, the traversals on $\ctrie(\mathcal{S})$ with our marking operations
  takes time linear in its size $O(k)$.
  Overall, for each $S_i \in \mathcal{S}$,
  we can compute $\lspo(i,j)$ for all $1 \leq i \leq k$
  in optimal $O(|S_i| + i)$ time,
  which leads to an overall $O(n + k^2)$ time for all the $k$ strings in $\mathcal{S}$.

  We build $\AC(\mathcal{S})$ by Theorem~\ref{theo:AC_const}.
  We can compute $\ctrie(\mathcal{S})$ and $\des(u) = v$ for all nodes $v$ in $\trie(\mathcal{S})$ in time linear in the size of $\trie(\mathcal{S})$,
  which is $O(n)$.
\end{proof}

For the length-bounded APSP problem, we have the following:
\begin{corollary} \label{coro:static_length}
For a static set $\mathcal{S} = \{S_1, \ldots, S_k\}$ of $k$ strings of total length $n$,
our algorithm solves the length-$\ell$ static-APSP problem with $O(n)$ working space and 
  \begin{itemize}
  \item in $O(n + |\mathcal{L_{\mathcal{S}}^{\geq \ell}}|)$ time for an integer alphabet of size $\sigma = n^{O(1)}$;
  \item in $O(n \log \sigma + |\mathcal{L_{\mathcal{S}}^{\geq \ell}}|)$ time for a general alphabet of size $\sigma$.
  \end{itemize}
Our data structure can be reused for different values of the length-threshold $\ell$.
\end{corollary}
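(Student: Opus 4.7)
The plan is to run Algorithm~1 unchanged except for one extra termination check in Step~(4): abort the failure-link traversal as soon as $|\str(\flink(v))| < \ell$, in addition to the original ``$v$ is the root'' test. The preprocessing---building $\AC(\mathcal{S})$, $\ctrie(\mathcal{S})$, the map $\des(\cdot)$, and a table of string depths---is carried out exactly as in Theorem~\ref{theo:static} and is independent of $\ell$; this single build supports arbitrary future thresholds, yielding the reusability claim for free. Correctness is immediate: each node $v$ reached in the failure-link chain with $|\str(v)| \geq \ell$ contributes $\lspo(i,j) = |\str(v)| \geq \ell$ for every id $j$ found in the traversal of $\mathsf{T}_{\des(v)}$, while the ancestors with $|\str(v)| < \ell$ that we skip could only have produced overlaps shorter than $\ell$.

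For the running time, I would refine the counting in the proof of Theorem~\ref{theo:static}. The failure-link chain processed for $S_i$ has length at most $|S_i|+1$, contributing $O(n)$ overall. For the $\ctrie(\mathcal{S})$ work, each iteration roots a traversal at $u = \des(v)$ with $|\str(u)| \geq |\str(v)| \geq \ell$. Within one $S_i$, the marking invariant ensures each $\ctrie$-node is touched at most $O(1)$ times---once as an unmarked interior node and once as a marked boundary. Marked-boundary visits are charged to the failure-link step that produced the mark, which totals $O(|S_i|)$ per $S_i$. For unmarked visits, I would use the structural fact that in $\ctrie(\mathcal{S})$ every leaf carries an id and every non-branching internal node carries an id, so in the induced traversed subtree the count of non-id interior nodes is bounded (via the usual ``branching $\leq$ leaves'' inequality) by the number of leaves, which are either original $\ctrie$ leaves (id-nodes, hence outputs) or marked boundaries (already charged). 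Every id visited is reported with $\lspo(i,j) \geq \ell$, so summed over all $i$ the unmarked work is $O(|\mathcal{L_{\mathcal{S}}^{\geq \ell}}|)$. Adding Theorem~\ref{theo:AC_const}'s cost yields the claimed $O(n + |\mathcal{L_{\mathcal{S}}^{\geq \ell}}|)$ or $O(n\log\sigma + |\mathcal{L_{\mathcal{S}}^{\geq \ell}}|)$ running time in $O(n)$ working space.

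The main obstacle I foresee is precisely the charging step above: a branching $\ctrie$-node whose children are all pruned away by the marking could in principle escape the naive ``branching $\leq$ leaves'' bound taken on the \emph{full} $\ctrie$. The clean fix is to apply the inequality on the \emph{induced} subtree of traversed nodes, in which the marked boundaries themselves act as leaves, so the inequality is restored and these extra leaves have already been paid for by the failure-link-step charge.
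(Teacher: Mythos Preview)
Your proposal is correct and takes essentially the same approach as the paper: both modify Algorithm~1 by terminating the failure-link traversal once the string depth drops below the threshold $\ell$, and both observe that the $\ell$-independent preprocessing gives reusability for free. In fact your write-up is more careful than the paper's own proof, which simply states the modified termination condition without spelling out the output-sensitive charging argument; your bound via id-carrying nodes and the branching-versus-leaves inequality on the traversed subtree is exactly what is needed to justify the $O(n + |\mathcal{L}_{\mathcal{S}}^{\geq \ell}|)$ running time (and your foreseen ``obstacle'' does not actually arise, since the marked roots $\des(v), \des(\flink(v)), \ldots$ all lie on a single root-to-leaf path in $\ctrie(\mathcal{S})$, so each traversal encounters at most one marked boundary).
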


\begin{proof}
  It suffices for us to change Step (3) and Step (4) in Algorithm 1 as follows:
  \begin{enumerate}
  \item[(3)] \underline{If $|\str(v)| \leq \ell$}, finish the procedure for the given string $S_i$. Unmark all marked nodes.
  \item[(4)] \underline{If $|\str(v)| > \ell$}, take the failure link from $v$ on $\trie(\mathcal{S})$, and set $v \leftarrow \flink(v)$. Continue Step (2).
  \end{enumerate}
  We can reuse our data structure for different values of $\ell$,
  as we only need to change the underlined conditions in the above.
\end{proof}

\section{Algorithm for Dynamic APSP} \label{sec:dynamic}

In this section, we present our algorithm for solving the dynamic APSP problem:
Given a new string $S_i$ that is inserted to the current set $\mathcal{S}_{i-1} = \{S_1, \ldots, S_{i-1}\}$ of $i-1$ strings,
our goal is to compute the outputs
$\mathcal{F}_i$ and $\mathcal{B}_i$ of Problem~\ref{prob:dynamic}.

In what follows, we will show the following:
\begin{theorem} \label{theo:dynamic}
  There exists a DAWG-based data structure of $O(n_i)$ space,
  which
  solves the dynamic APSP problem of 
  computing $\mathcal{F}_{i}$ and $\mathcal{B}_i$
  for a given new string $S_{i}$ in $O(|S_i| \log \sigma + i)$ time,
  where $n_i = \Vert \mathcal{S}_{i}\Vert = \sum_{j=1}^i|S_i|$.
\end{theorem}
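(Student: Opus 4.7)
The plan is to emulate the static Algorithm~1 on $\DAWG(\mathcal{S}_i)$, with the longest-path tree $\LPT(\mathcal{S}_i)$ taking the role of $\trie(\mathcal{S})$ (by Observation~\ref{obs:LPT}) and the suffix-link tree of the DAWG taking the role of the failure-link tree $\FLT(\mathcal{S})$. Alongside $\DAWG(\mathcal{S}_i)$ we will maintain two compact trees, one obtained from $\LPT(\mathcal{S}_i)$ and one from the suffix-link tree, each by keeping only the $i$ id-bearing nodes $[S_1]_{\mathcal{S}_i},\ldots,[S_i]_{\mathcal{S}_i}$ together with the branching nodes, so that each has $O(i)$ nodes. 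For every DAWG node $v$ we also store pointers to the shallowest descendant of $v$ in each of these two compact trees. When $S_i$ arrives, Theorem~\ref{theo:DAWG_update} updates the DAWG in $O(|S_i|\log\sigma)$ time, and within the same bound we plan to patch the two compact trees and the shallowest-descendant pointers by walking upward from the newly created or split DAWG nodes until reaching an unchanged region.

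\textbf{Computing $\mathcal{F}_i$.} We first spell $S_i$ in the DAWG in $O(|S_i|\log\sigma)$ time (simultaneously recording the walk nodes $v_p = [S_i[1..p]]_{\mathcal{S}_i}$ for later reuse in $\mathcal{B}_i$), reach $v_0 = [S_i]_{\mathcal{S}_i}$, and climb the suffix-link chain $v_0, v_1, \ldots, v_\ell$ that ends at the DAWG source. Each $\Long(v_k)$ is a suffix of $S_i$, and their lengths are strictly decreasing, so $\ell \leq |S_i|$. Correctness rests on the fact that if $w = \lspo(i,j)$ and $[w]_{\mathcal{S}_i} = v_k$, then $w = \Long(v_k)$: any longer string $y$ in the same endpos class would have to occur in $S_j$ ending at position $|w|$, which is impossible for $|y|>|w|$. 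Moreover, $\Long(v_k)$ is a prefix of $S_j$ iff $v_k$ is an ancestor of $[S_j]_{\mathcal{S}_i}$ in $\LPT(\mathcal{S}_i)$. So at each $v_k$, entering the compact $\LPT$ at the stored shallowest descendant of $v_k$ and running the mark-and-traverse step of Algorithm~1 reports $\lspo(i,j) = |\Long(v_k)|$ for every newly-found id $j$; processing the $v_k$'s in chain order guarantees that each id is paired with the maximum-length match, and the marking argument bounds the total compact-tree work by $O(i)$.

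\textbf{Computing $\mathcal{B}_i$.} Using the recorded nodes $v_p$, note that $S_i[1..p]$ is a suffix of $S_j$ iff $(j,|S_j|)\in\EndPos_{\mathcal{S}_i}(v_p)$, equivalently iff $v_p$ is an ancestor of $[S_j]_{\mathcal{S}_i}$ in the suffix-link tree. Processing $p = |S_i|, |S_i|-1, \ldots, 1$ and running the mark-and-traverse step on the compact suffix-link tree starting at the stored shallowest descendant of $v_p$ reports $\lspo(j,i) = p$ for each newly-found id $j$; the decreasing order on $p$ ensures each id is paired with the maximum-length match, and the marking again bounds the compact-tree work by $O(i)$.

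\textbf{Totals and main obstacle.} Summing the three steps gives $O(|S_i|\log\sigma + i)$ per insertion, and the whole data structure uses $O(n_i)$ space since the DAWG and both trees have size linear in $n_i$. The main obstacle we expect is the consistent maintenance of the two compact trees and the shallowest-descendant pointers under Blumer's DAWG update: one must argue that the reroutings forced by inserting $S_i$ are localized to the $O(|S_i|)$ DAWG nodes that are newly created, newly split, or newly id-bearing, and that all of them fit within the $O(|S_i|\log\sigma)$ DAWG-update budget.
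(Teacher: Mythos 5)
Your treatment of $\mathcal{F}_i$ is essentially the paper's: update $\DAWG(\mathcal{S}_{i-1})$ by Theorem~\ref{theo:DAWG_update}, climb the suffix-link chain from $[S_i]_{\mathcal{S}_i}$ (the suffix links replacing the failure links), and run the mark-and-traverse step of Algorithm~1 on a compacted $O(i)$-size prefix tree via shallowest-descendant pointers; your correctness argument (every chain node's $\Long$ is a suffix of $S_i$, and $\lspo(i,j)$ is realized as the $\Long$ of a chain node that is an $\LPT$-ancestor of $[S_j]$) matches Observation~\ref{obs:LPT} and the paper's reasoning. One presentational difference: the paper does not attach descendant pointers to every DAWG node. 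It maintains $\ctrie(\mathcal{S}_i)$ as a separate, plain compacted trie of the string set, updated by na\"ively inserting $S_i$ from the root in $O(|S_i|\log\sigma)$ time, keeps $\des(\cdot)$ only for nodes of $\trie(\mathcal{S}_i)$ (the prefix classes), and tests in $O(1)$ whether a chain node belongs to $\trie(\mathcal{S}_i)$ after tracing $S_i$'s path once. This makes the update of the auxiliary structure a trivial $O(|S_i|)$ bookkeeping step and entirely decouples it from the internals of Blumer's DAWG update.

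Where you genuinely diverge is $\mathcal{B}_i$, and that is exactly where your acknowledged gap sits. The paper's solution is a one-line reduction: maintain a second, mirrored structure --- the DAWG of the reversed strings $\rev{\mathcal{S}_i}$ --- and compute $\mathcal{B}_i$ there as an $\mathcal{F}$-type computation with the roles of prefixes and suffixes swapped. You instead work on the forward DAWG's suffix-link tree, using the (correct) characterization that $S_i[1..p]$ is a suffix of $S_j$ iff $v_p=[S_i[1..p]]_{\mathcal{S}_i}$ is an ancestor of $[S_j]_{\mathcal{S}_i}$ in $\SLT(\mathcal{S}_i)$; since the suffix-link tree is the suffix tree of the reversed strings, this is morally the same reduction done in place. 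But it obligates you to maintain, under Blumer's update, a compacted Steiner tree of the id-bearing nodes inside $\SLT(\mathcal{S}_i)$ together with shallowest-compact-descendant pointers --- the very maintenance you flag as ``the main obstacle'' and do not prove. The localization argument can probably be pushed through (the pointers needing change when $[S_i]$ becomes id-bearing are $\SLT$-ancestors of $[S_i]$, hence at most $|S_i|+1$ of them, and node cloning only subdivides $\SLT$ edges), but as written this step is missing, whereas the paper's mirrored-DAWG approach never has to argue anything about how the suffix-link tree deforms during an update. If you keep your route, you must supply that maintenance lemma; if you adopt the paper's, the theorem follows from your $\mathcal{F}_i$ argument applied verbatim to $\rev{\mathcal{S}_i}$ at the cost of a second $O(n_i)$-space structure.
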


\begin{corollary} \label{coro:dynamic_length}
  Let $\ell > 0$ be a length threshold.
  There exists a DAWG-based data structure of $O(n_i)$ space,
  which solves the length-bounded dynamic APSP problem of 
  computing $\mathcal{F}_{i}^{\geq \ell}$ and $\mathcal{B}_i^{\geq \ell}$
  for a given new string $S_{i}$ in $O(|S_i| \log \sigma + i)$ time,
  where $n_i = \Vert \mathcal{S}_{i}\Vert = \sum_{j=1}^i|S_i|$.
\end{corollary}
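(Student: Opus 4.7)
The plan is to derive Corollary~\ref{coro:dynamic_length} from Theorem~\ref{theo:dynamic} in exactly the same way that Corollary~\ref{coro:static_length} was derived from Theorem~\ref{theo:static}. The algorithm promised by Theorem~\ref{theo:dynamic} will (analogously to Algorithm~1) enumerate candidate suffix–prefix overlaps by walking suffix links on $\DAWG(\mathcal{S}_i)$ for the $\mathcal{F}_i$ side, and by performing a structurally analogous traversal (on the trie/$\LPT$ side) for the $\mathcal{B}_i$ side. In both cases each suffix-link step strictly decreases the length of the overlap being considered. My proposal is simply to install the extra guard ``halt as soon as $|\Long(v)| \leq \ell$'' (and the analogous string-depth test on the $\mathcal{B}_i$ traversal) at the top of each iteration of the walk.

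Correctness is immediate from the ordering property: since the traversal visits candidates in strictly decreasing order of overlap length, as soon as the current length drops to $\ell$ or below every remaining candidate would have length strictly less than $\ell$, and is therefore excluded from $\mathcal{F}_i^{\geq \ell}$ and $\mathcal{B}_i^{\geq \ell}$ by definition. Conversely, every candidate visited prior to halting has length $\geq \ell$ and so belongs to the desired output. For the time bound, the modified algorithm runs over only a prefix of the suffix-link chain traversed by the unrestricted algorithm of Theorem~\ref{theo:dynamic}, so its running time is bounded by $O(|S_i|\log\sigma + i)$ (and is often smaller when $\ell$ is large). The working space is $O(n_i)$ because we use exactly the same underlying DAWG-based structure. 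Since $\ell$ appears only as a scalar in the halting test and is never baked into the data structure, the same structure can be reused across different values of $\ell$.

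The one point I expect to need genuine care with is the $\mathcal{B}_i$ direction. For $\mathcal{F}_i$ the quantity $|\Long(v)|$ on the DAWG directly measures the overlap length, so the threshold test is transparent; for $\mathcal{B}_i$ the relevant traversal proceeds on a different structure (the trie/$\LPT$ side together with its marking scheme used to bucket strings of $\mathcal{S}_i$ by where their suffixes land), and we must verify that the depth quantity tested at each step equals the overlap length reported when the step terminates. Once this correspondence is extracted from the proof of Theorem~\ref{theo:dynamic}, installing the same scalar guard $\leq \ell$ at both traversals gives the corollary with no additional machinery.
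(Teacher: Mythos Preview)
Your approach is essentially the paper's: the authors simply remark that Corollary~\ref{coro:dynamic_length} follows from Theorem~\ref{theo:dynamic} ``in a similar manner to the static case,'' i.e., by replacing the root test with the guard $|\str(v)|\le \ell$ exactly as in Corollary~\ref{coro:static_length}.

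Your worry about the $\mathcal{B}_i$ side is misplaced, though. In the paper, $\mathcal{B}_i$ is not computed by a structurally different traversal on ``the trie/$\LPT$ side''; it is computed by running the \emph{identical} $\mathcal{F}_i$ algorithm on $\DAWG(\rev{\mathcal{S}_i})$ for the reversed strings (Section~\ref{sec:reverse}). Hence the same scalar guard $|\str(v)|\le \ell$ applies verbatim on that second DAWG, with no separate depth-correspondence argument needed.
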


We will focus on proving Theorem~\ref{theo:dynamic}.
Corollary~\ref{coro:dynamic_length} can be obtained from Theorem~\ref{theo:dynamic},
in a similar manner to the static case in Section~\ref{sec:static}.

\subsection{Computing $\mathcal{F}_i$}
\label{sec:forward}


We use the DAWG data structure for the dynamic case.
Consider a set $\mathcal{S} = \{S_1,\ldots, S_{i-1}\}$ of $i-1 $ strings.
By Observation~\ref{obs:LPT},
$\trie(\mathcal{S}_{i-1})$ is an induced tree of $\LPT(\mathcal{S}_{i-1})$
that consists only of the paths spelling out $S_1, \ldots, S_{i-1}$
from the source of $\DAWG(\mathcal{S}_{i-1})$.
Also, if $\flink(u) = v$ for a node $u$ of $\trie(\mathcal{S}_{i-1})$,
then there is a chain of suffix links $u, \slink(u), \ldots, v$
of length $k \geq 1$
from the node $u$ to the node $v$ in $\DAWG(\mathcal{S}_{i-1})$
(i.e. $\slink^k(u) = v$).
It is known that the suffix links of $\DAWG(\mathcal{S}_{i-1})$
form an edge-reversed tree that is equivalent to the suffix tree~\cite{Weiner73} of
the \emph{reversed} strings~\cite{Blumer1987,TakagiIABH20}.
We denote this suffix link tree by $\SLT(\mathcal{S})$,
and will use $\SLT(\mathcal{S})$ in the dynamic case,
instead of the tree $\FLT(\mathcal{S})$ of failure links in the static case.

Now, we design our algorithm for the dynamic case with the DAWG structure,
using the idea from Section~\ref{sec:static} for the static case with the AC-automaton (see also Algorithm 1).
Suppose that we have built and maintained the aforementioned data structures
for the dynamic set $\mathcal{S}_{i-1} = \{S_1, \ldots, S_{i-1}\}$ of $i-1$ strings.
Given a new string $S_i$ to insert, we update $\DAWG(\mathcal{S}_{i-1})$
to $\DAWG(\mathcal{S}_{i})$ in $O(|S_i|\log \sigma)$ time with Theorem~\ref{theo:DAWG_update}.
We have now updated $\SLT(\mathcal{S}_{i-1})$ to $\SLT(\mathcal{S}_i)$ as well.

To perform equivalent procedures as in Algorithm 1,
we climb up the suffix link path from the node $v$ with $\str(v) = S_i$ toward the root on $\SLT(\mathcal{S}_i)$.
Namely, we use $\slink$ instead of $\flink$.
Each time we visit a node $v$ that is contained in $\trie(\mathcal{S}_i)$,
we access $u = \des(v)$ on $\ctrie(\mathcal{S}_i)$
and perform the same procedures as in Algorithm 1.
This gives us $\lspo(i,j)$ for all $1 \leq j \leq i$
in a total of $O(|S_i| \log \sigma + i)$ time, since $\ctrie(\mathcal{S}_i)$ is of size $O(i)$.
We can determine whether a node $v$ is in $\trie(\mathcal{S}_i)$ or not in $O(1)$ time
after a simple $O(|S_i| \log \sigma)$-time preprocessing per new string $S_i$:
After inserting $S_i$ to the DAWG,
we simply trace the path that spells out $S_i$ from the source on the DAWG.

What remains is how to update $\ctrie(\mathcal{S}_{i-1})$ to $\ctrie(\mathcal{S}_i)$,
and update the $\des$ function.
We do this by na\"ively inserting the new string $S_i$ to $\ctrie(\mathcal{S}_{i-1})$,
traversing the tree with $S_{i}$ from the root in $O(|S_i| \log \sigma)$ time.
The $\des$ function needs to be updated when
an edge $(v, u)$ on $\ctrie(\mathcal{S}_{i-1})$ is split into two edges $(v,w)$ and $(w,u)$,
where $w$ is the parent of the new leaf representing $S_i$ on $\ctrie(\mathcal{S}_i)$.
Let $v = v_1, \ldots, v_x = w$ be the nodes in the uncompacted trie
$\trie(\mathcal{S}_i)$
that are implicit on the edge $(v,w)$ in the compacted trie $\ctrie(\mathcal{S}_i)$.
We update $\des(v_1) \leftarrow w, \ldots, \des(v_{x}) \leftarrow w$.
We set $\des(y) \leftarrow s$ for all new nodes $y$ of $\trie(\mathcal{S}_i)$,
where $s$ is the new leaf representing $S_i$.
These updates can be done in $O(|S_i|)$ time.

To summarize this subsection, we have proven the following:
\begin{lemma} \label{lem:forward}
  There is a DAWG-based data structure of $O(n_i)$ space,
  which computes $\mathcal{F}_i$ in $O(|S_i| \log \sigma + i)$ time per new string $S_i$
  inserted.
\end{lemma}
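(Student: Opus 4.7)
The plan is to emulate the static AC-automaton algorithm (Algorithm~1) on top of the DAWG, with the DAWG's suffix links playing the role of failure links and with the compact trie $\ctrie$ doing exactly the same marking bookkeeping as in the static case. Across insertions I maintain three pieces of data for $\mathcal{S}_{i-1}$: the graph $\DAWG(\mathcal{S}_{i-1})$ together with its suffix-link tree $\SLT(\mathcal{S}_{i-1})$, the compact trie $\ctrie(\mathcal{S}_{i-1})$, and the function $\des$ defined on all (possibly implicit) nodes of $\trie(\mathcal{S}_{i-1})$ — which, embedded in the DAWG via Observation~\ref{obs:LPT}, lie on the source-to-$S_j$ paths of $\LPT(\mathcal{S}_{i-1})$.

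When $S_i$ arrives, the first phase updates these structures. I would invoke Theorem~\ref{theo:DAWG_update} to rebuild the DAWG and its suffix-link tree in $O(|S_i|\log\sigma)$ time. Then I insert $S_i$ into $\ctrie(\mathcal{S}_{i-1})$ by tracing from the root along $S_i$; at most one existing edge $(v,u)$ is split into $(v,w),(w,u)$ and a new leaf $s$ for $S_i$ is attached. I refresh $\des$ locally: for every implicit $\trie$-node $v_1,\dots,v_x=w$ now sitting on the new sub-edge $(v,w)$, set $\des(v_t)\leftarrow w$; for every newly created implicit node on the tail below $w$, set $\des(\cdot)\leftarrow s$. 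The total refresh cost is $O(|S_i|\log\sigma)$. In the same pass, by walking $S_i$ in $\DAWG(\mathcal{S}_i)$ I tag every DAWG node that sits on this path, giving a constant-time test ``is this DAWG node in $\trie(\mathcal{S}_i)$?''.

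The query phase follows Algorithm~1 verbatim with $\slink$ substituted for $\flink$. Starting from the node $v$ with $\Long(v)=S_i$, I climb the suffix-link chain toward the source of $\DAWG(\mathcal{S}_i)$. Whenever the current $v$ is tagged as a trie node, I access $u=\des(v)$, traverse the induced subtree of unmarked $\ctrie$-nodes rooted at $u$, report $\lspo(i,j)=|\str(v)|$ for each id $j$ encountered, and mark $u$. The correctness matches the static case because the failure-link step from a $\trie$-node $u$ to a $\trie$-node $v$ used in Algorithm~1 is realized as $v=\slink^k(u)$ for some $k\geq 1$ in the DAWG, so intermediate non-trie nodes are simply skipped by the membership test.

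For the complexity, the suffix-link ascent visits at most $|S_i|+1$ DAWG nodes. The marking argument in the proof of Theorem~\ref{theo:static} carries over unchanged: nodes of the chain are processed in strictly decreasing string depth, so no $\ctrie$-node is marked twice, and the total cost of the induced-subtree traversals is $O(|\ctrie(\mathcal{S}_i)|)=O(i)$. Adding the $O(|S_i|\log\sigma)$ update cost yields the claimed $O(|S_i|\log\sigma+i)$ bound, while the space remains $O(n_i)$ since the DAWG, $\SLT$, $\ctrie$, and $\des$ are all of size linear in $n_i$. The step I expect to be the most delicate is the $\des$-update after the edge split: it must be proved that the implicit nodes affected sum to $O(|S_i|)$ across the whole edit, which works precisely because only one edge on the $S_i$-path is split and each implicit $\trie$-node can be ``absorbed into'' a new $\ctrie$-destination at most once over its lifetime.
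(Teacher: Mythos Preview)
Your proposal is correct and follows essentially the same route as the paper: maintain $\DAWG(\mathcal{S}_i)$ with its suffix-link tree, $\ctrie(\mathcal{S}_i)$, and $\des$; on insertion, update the DAWG via Theorem~\ref{theo:DAWG_update}, insert $S_i$ into $\ctrie$ and patch $\des$ along the split edge and the new tail, tag the $\trie$-nodes by tracing $S_i$ from the source, and then run Algorithm~1 with $\slink$ in place of $\flink$. Your cost accounting and the correctness reduction ($\flink$ is a power of $\slink$ through non-trie nodes) are exactly those in the paper; the only cosmetic difference is that the paper states the $\des$ refresh as $O(|S_i|)$ directly (the number of implicit nodes on the split sub-edge plus the new tail is at most $|S_i|$), so your closing amortization remark about nodes being ``absorbed at most once over their lifetime'' is not needed.
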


\begin{remark}
  Diptarama et al.~\cite{HendrianIYS19} showed how to simulate
  the failure link tree of the AC-automaton
  with the suffix link tree of the DAWG that is augmented with
  a nearest marked ancestor (NMA) data structure~\cite{Westbrook92}.
  It should be noted that NMA data structures are not required in our algorithm.
\end{remark}

\subsection{Computing $\mathcal{B}_i$}
\label{sec:reverse}

We can easily reduce the problem of computing $\mathcal{B}_i$
to computing $\mathcal{F}_i$ for the set
$\rev{\mathcal{S}_i} = \{\rev{S_1}, \ldots, \rev{S_i}\}$ of \emph{reversed strings},
where the roles of suffixes and prefixes are swapped.
This immediately gives us the following:
\begin{corollary}
  There is a DAWG-based data structure of $O(n_i)$ space,
  which computes $\mathcal{B}_i$ in $O(|S_i| \log \sigma + i)$ time
  per new string $S_i$ inserted.
\end{corollary}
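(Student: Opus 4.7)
The plan is to reduce the computation of $\mathcal{B}_i$ for $\mathcal{S}_i$, entry by entry, to the computation of $\mathcal{F}_i$ for the reversed set $\rev{\mathcal{S}_i} = \{\rev{S_1}, \ldots, \rev{S_i}\}$, and then apply Lemma~\ref{lem:forward} verbatim. First I would observe that reversal is a length-preserving bijection on $\Sigma^*$ that sends $\Suffix(S_j)$ to $\Prefix(\rev{S_j})$ and $\Prefix(S_i)$ to $\Suffix(\rev{S_i})$. Consequently, a string $w$ realizes $\lspo(j,i)$ in $\mathcal{S}_i$ if and only if $\rev{w}$ is the longest suffix of $\rev{S_i}$ that is also a prefix of $\rev{S_j}$, and the two have the same length. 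Thus each tuple $(j,i,\ell) \in \mathcal{B}_i$ corresponds bijectively to a tuple $(i,j,\ell)$ in the $\mathcal{F}$-set computed against $\rev{\mathcal{S}_i}$ when $\rev{S_i}$ is inserted.

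Next, I would maintain in parallel a second copy of the data structure of Lemma~\ref{lem:forward}, this time built on the reversed strings. Each time a new $S_i$ arrives, I would compute $\rev{S_i}$ in $O(|S_i|)$ time and feed it to this parallel structure, using the same index $i$ in both copies so that the id-mapping is the identity. By Lemma~\ref{lem:forward} applied to the reversed instance, the update together with the $\mathcal{F}$-style query for $\rev{S_i}$ runs in $O(|S_i|\log\sigma + i)$ time. Relabeling each returned tuple $(i,j,\ell)$ as $(j,i,\ell)$ then yields $\mathcal{B}_i$ directly, by the correspondence established above.

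For the space bound, I would note that we are only adding a second copy of the structure of Lemma~\ref{lem:forward}, and since reversal preserves lengths we have $\Vert \rev{\mathcal{S}_i}\Vert = \Vert \mathcal{S}_i\Vert = n_i$, so the total space remains $O(n_i)$. There is essentially no obstacle in this proof: the whole argument is a syntactic reduction, and the only bookkeeping is to preserve the common string-id assignment across the two parallel structures, which is automatic since they process the same insertion sequence. The complexity and space bounds then follow immediately from Lemma~\ref{lem:forward} applied to $\rev{\mathcal{S}_i}$.
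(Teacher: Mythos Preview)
Your proposal is correct and follows essentially the same approach as the paper: the paper also reduces the computation of $\mathcal{B}_i$ to computing $\mathcal{F}_i$ on the reversed set $\rev{\mathcal{S}_i}$ and then invokes Lemma~\ref{lem:forward}. Your write-up is more explicit about the bijection and the bookkeeping, but the underlying idea is identical.
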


\section{Algorithm for Fully Dynamic APSP}
\label{sec:fully_dynamic}

In this section,
we present a suffix tree based algorithm
for the fully dynamic APSP problem allowing for insertions and deletions of strings.

We make use of the following observation in our algorithm.

\begin{observation}
  \label{sec:suffix_tree_induced}
  The induced tree of $\STree(\mathcal{S})$
  that consists only of the paths spelling out the strings in $\mathcal{S}$
  is a compacted version of $\trie(\mathcal{S})$.
  The set of nodes of $\ctrie(\mathcal{S})$ is
  a subset of the nodes of this induced tree.
\end{observation}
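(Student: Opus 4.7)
The plan is to verify the two claims of the observation by working directly from the definitions. For the first claim, I would start by noting that the suffix tree $\STree(\mathcal{S})$ is built over the set $\Suffix(\mathcal{S})$, and since each $S_i$ is trivially a suffix of itself, the path from the root spelling out $S_i$ exists in $\STree(\mathcal{S})$ and terminates at an explicit node. Taking the union of these $k$ root paths yields a rooted subtree $\mathsf{T}$ whose set of (implicit and explicit) positions is exactly $\bigcup_{i=1}^{k} \Prefix(S_i)$. By definition, this is precisely the set of nodes of $\trie(\mathcal{S})$. Since edges of $\STree(\mathcal{S})$ are labeled by substrings rather than single characters, $\mathsf{T}$ is obtained from $\trie(\mathcal{S})$ by contracting maximal non-branching paths into single edges, which is the definition of a compacted trie.

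For the second claim, I would examine each of the node types that $\ctrie(\mathcal{S})$ retains from $\trie(\mathcal{S})$ and argue that every such node corresponds to an explicit node of $\STree(\mathcal{S})$, hence lives in $\mathsf{T}$. The root is explicit by definition. For a node representing some $S_i$, the string $S_i$ is a suffix of $S_i$, so by the third defining property of $\STree(\mathcal{S})$ it is represented by an explicit node. For a branching node $w$ of $\trie(\mathcal{S})$, there exist at least two distinct characters $a, b \in \Sigma$ with $wa, wb \in \Prefix(\mathcal{S}) \subseteq \Substr(\mathcal{S})$, so $w$ has two distinct right-extensions in $\Substr(\mathcal{S})$ and is therefore branching in $\STree(\mathcal{S})$, hence explicit. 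Since each of these nodes lies on a root-to-$S_j$ path in $\STree(\mathcal{S})$, it is a node of the induced tree $\mathsf{T}$.

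There is no real obstacle here: the observation is a direct consequence of the definitions of $\trie$, $\ctrie$, and $\STree$. The only place that requires a moment of care is confirming that a node which is branching in $\trie(\mathcal{S})$ remains branching (and not merely implicit with a single outgoing arc) in $\STree(\mathcal{S})$; this is handled by transporting the two distinct continuation characters from $\Prefix(\mathcal{S})$ to $\Substr(\mathcal{S})$ as above. With both claims established, the observation follows.
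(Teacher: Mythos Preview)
Your argument is correct and proceeds exactly along the lines the paper intends: the observation is meant to be an immediate consequence of the definitions of $\trie$, $\ctrie$, and the Weiner-style $\STree$, and the paper accordingly gives no proof at all, merely pointing to Fig.~\ref{fig:AC} and Fig.~\ref{fig:STree} for a concrete illustration. Your write-up simply makes explicit the definitional unpacking the paper leaves to the reader; in particular, your check that a branching node of $\trie(\mathcal{S})$ stays branching in $\STree(\mathcal{S})$ (via the two distinct right-extensions $wa,wb\in\Prefix(\mathcal{S})\subseteq\Substr(\mathcal{S})$) is the only non-trivial step, and you handle it correctly.
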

See Fig.~\ref{fig:AC} and Fig.~\ref{fig:STree} for a concrete example of this observation.

\subsection{Insertions}

In the case of insertions, one can use essentially the same algorithm from Section~\ref{sec:dynamic}:
When a new string $S_i$ is inserted to the current set $\mathcal{S}$,
we first update the suffix tree in $O(|S_i|\log \sigma)$ time by Theorem~\ref{theo:STree_insert}.
Then, we climb up the suffix link path from the node representing $S_i$ toward the root.
By Observation~\ref{sec:suffix_tree_induced},
the $\des(\cdot)$ function to access the corresponding nodes in $\ctrie(\mathcal{S})$ can be maintained analogously.
In more detail, by Observation~\ref{sec:suffix_tree_induced},
a single edge in $\ctrie(\mathcal{S})$ may be split into
multiple edges in the compacted version of $\trie(\mathcal{S})$.
Thus, 
when an edge $(u,v)$ in the compacted version of $\trie(\mathcal{S})$
is split into two edges $(u,w)$ and $(w,v)$ during the update
of the suffix tree,
then we simply set $\des(w) \leftarrow \des(v)$.
These all give us how to compute $\mathcal{F}_i$ in $O(|S_i| \log \sigma)$ time
using $\STree(\mathcal{S})$ of $O(n)$ space,
where $n = \Vert \mathcal{S} \Vert$,
as in Section~\ref{sec:forward}.

One can compute $\mathcal{B}_i$ in $O(|S_i| \log \sigma)$ time
and $O(n)$ space
by using $\STree(\rev{\mathcal{S}})$ for the set $\rev{\mathcal{S}}$
of reversed strings, as in Section~\ref{sec:reverse}.

\subsection{Deletions}

We can also extend our suffix-tree based algorithm
in the case where both insertions and deletions are supported.

Suppose that a string $S_i$ is deleted from the current set $\mathcal{S}$ of strings.
Let $S_i[r..|S_i|]$ be the longest suffix of $S_i$
that corresponds to a branching node in the suffix tree,
and let $S_i[t..|S_i|]$ be the longest suffix of $S_i$
such that $S_i[t..|S_i|] = S_j$ for some $j \in \ind(\mathcal{S}) \setminus \{i\}$.
Let $t = \infty$ when there is no such $j$ exists.
Let $p = \min\{r,t\}$.

For each $1 \leq q < p$ in increasing order,
we remove the nodes $v_q$ (leaves or non-branching internal nodes)
that respectively represent the suffixes $S_i[q..|S_i|]$.
By tracing the suffix link chain from $v_1$ to $v_{p}$,
this can be done in $O(1)$ time for each suffix $S_i[q..|S_i|]$
that is removed from the tree.
If the parent of $v_q$ becomes non-branching
after the removal of $v_q$, then the parent node is removed and its in-edge and out-edge are merged into a single edge.
This is the same procedure as in the \emph{suffix tree in the sliding window model}~\cite{Larsson1996,Senft2005,LeonardIMB2024},
and the total time complexity is $O(p \log \sigma) \subseteq O(|S_i| \log \sigma)$ for this step.
Now the tree topology of the suffix tree has been updated.

Recall that each edge label $x$ is represented by a tuple
$(i,b,e)$ such that $x = S_i[b..e]$.
When $S_i$ is removed from $\mathcal{S}$,
we need to replace a tuple $(i,b,e)$ which refers to $S_i$
with some tuple $(i',b',e')$ such that $i' \in \ind(\mathcal{S}) \setminus \{i\}$ and $x = S_{i'}[b'..e']$.
This can be done in $O(1)$ time by adapting the leaf-pointer algorithm~\cite{LeonardIMB2024} originally proposed for the sliding suffix trees.
Leonard et al.~\cite{LeonardIMB2024} showed that
edge-label maintenance in the suffix tree reduces to
maintaining a pointer from each node to an arbitrary leaf in its subtree.
They showed how to maintain such pointers in worst-case $O(1)$ time per leaf deletion.
By applying this technique,
we can update edge labels of $\STree(\mathcal{S})$ in a total of $O(p) \subseteq O(|S_i|)$ time.

Analogously, we update $\STree(\rev{\mathcal{S}})$
in $O(|S_i|\log \sigma)$ time for the deleted string $S_i$.

In this fully dynamic setting,
we maintain the output list $\mathcal{L}_{\mathcal{S}}$ with
two arrays $F$ and $B$
such that $F[i]$ stores $\lspo(i,j)$ for all $j \in \ind(\mathcal{S})$
and $B[i]$ stores $\lspo(j,i)$ for all $j \in \ind(\mathcal{S})$.
When a string $S_i$ is deleted,
then we simply remove all elements in $F[i]$ and in $B[i]$.
This clearly takes $O(k)$ time, where $k = |\mathcal{S}|$.
We maintain these two dynamic arrays $F$ and $B$ by a standard
doubling/halving technique:
When all the entries of $F$ and $B$ become full,
then we rebuild two arrays with $2|F| = 2|B|$ entries each.
When the number of occupied entries in $F$ and in $B$ reaches $|F|/4 = |B|/4$,
then we rebuild two arrays with $|F|/2 = |B|/2$ entries each,
and refresh the string ids.
This takes $O(k)$ amortized time per insertion/deletion of a string,
where $k$ is the number of strings in the set $\mathcal{S}$.

Finally, we obtain the following results:

\begin{theorem} \label{theo:fully_dynamic}
  There exists a suffix-tree-based data structure of $O(n)$ space,
  which solves the fully dynamic APSP problem  
  in $O(|S_i| \log \sigma + k)$ amortized time
  for each inserted/deleted string $S_i$,
  where $n = \Vert \mathcal{S} \Vert$ and $k = |\mathcal{S}|$.
\end{theorem}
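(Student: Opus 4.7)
The plan is to combine the insertion algorithm of Section~\ref{sec:dynamic}, now simulated on the suffix tree rather than on the DAWG, with a sliding-window-style deletion procedure on $\STree(\mathcal{S})$ and $\STree(\rev{\mathcal{S}})$, and to maintain the output list $\mathcal{L}_{\mathcal{S}}$ by two id-indexed dynamic arrays under standard amortized doubling/halving. For insertions I would reuse the argument of Section~\ref{sec:dynamic} almost verbatim. The key enabling fact is Observation~\ref{sec:suffix_tree_induced}: the induced tree of $\STree(\mathcal{S})$ carrying the strings of $\mathcal{S}$ is a compacted version of $\trie(\mathcal{S})$ whose node set contains that of $\ctrie(\mathcal{S})$. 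Hence the $\des(\cdot)$ map can be patched locally each time a suffix-tree edge is split during an insertion (copy $\des$ from the lower endpoint to the new internal node), and climbing the suffix link chain from the node representing $S_i$ to the root mimics the failure link climb of Algorithm~1 to produce $\mathcal{F}_i$; the mirror computation on $\STree(\rev{\mathcal{S}})$ yields $\mathcal{B}_i$. By Theorem~\ref{theo:STree_insert} the tree updates cost $O(|S_i|\log\sigma)$, and the rest of the analysis of Section~\ref{sec:dynamic} bounds the per-insertion work by $O(|S_i|\log\sigma + k)$.

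The hard part will be deletions, where the suffix tree must actually shrink and edge labels referencing the departing $S_i$ must be reassigned. I would first identify the longest suffix $S_i[p..|S_i|]$ that is either branching in $\STree(\mathcal{S})$ or equals some $S_j$ with $j \in \ind(\mathcal{S}) \setminus \{i\}$; then, walking the suffix-link chain from the leaf representing $S_i$ in increasing order of starting position, I would excise the non-branching nodes corresponding to the shorter suffixes, contracting the parent's in-edge and out-edge whenever the parent ceases to branch. This is exactly the deletion primitive already used in the sliding-window suffix tree of~\cite{Larsson1996,Senft2005,LeonardIMB2024} and fits in $O(|S_i|\log\sigma)$ time. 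For the edge-label rewrite I would invoke the leaf-pointer maintenance of Leonard et al.~\cite{LeonardIMB2024}, which keeps, in worst-case $O(1)$ per leaf removal, a pointer from every node to some leaf in its subtree; any such leaf supplies a valid replacement tuple $(i',b',e')$ for labels of the form $(i,b,e)$. The symmetric work is done on $\STree(\rev{\mathcal{S}})$, and the $\des(\cdot)$ map is updated reversibly by the same rule used for splits: when a node is contracted, no update is needed for the descendants whose $\des$ already pointed past it.

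Finally, to realize $\mathcal{L}_{\mathcal{S}}$ I would keep two arrays $F$ and $B$ indexed by string id, with $F[i]$ and $B[i]$ holding $\lspo(i,\cdot)$ and $\lspo(\cdot,i)$ respectively. An insertion of $S_i$ fills both rows within the $O(|S_i|\log\sigma + k)$ budget already established, and a deletion simply clears row $i$ in $O(k)$ time. Doubling the arrays when they become full and halving them (with a fresh contiguous id renumbering) once their load drops below $1/4$ gives $O(k)$ amortized cost per insertion or deletion while keeping the overall space at $O(n)$, since the suffix trees dominate the footprint. Combining the insertion simulation, the deletion primitive with leaf-pointer edge-label repair, and the amortized array maintenance yields the claimed $O(|S_i|\log\sigma + k)$ amortized bound and completes the proof.
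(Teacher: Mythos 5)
Your proposal is correct and follows essentially the same route as the paper's own proof: simulating the Section~\ref{sec:dynamic} insertion algorithm on the suffix tree via Observation~\ref{sec:suffix_tree_induced} with local $\des(\cdot)$ patching on edge splits, deleting via the sliding-window suffix-tree primitive with the leaf-pointer technique of Leonard et al.~\cite{LeonardIMB2024} for edge-label repair, and maintaining $\mathcal{L}_{\mathcal{S}}$ with doubling/halving arrays $F$ and $B$ for the $O(k)$ amortized term. The only quibble is terminological: the nodes excised on the suffix-link walk represent the suffixes $S_i[q..|S_i|]$ with $q<p$, which are the suffixes \emph{longer} than $S_i[p..|S_i|]$, not shorter.
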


\begin{corollary} \label{coro:fully_dynamic_length}
  Let $\ell > 0$ be a length threshold.
  There exists a suffix-tree-based data structure of $O(n)$ space,
  which solves the fully dynamic APSP problem 
  in $O(|S_i| \log \sigma + k)$ amortized time
  for each inserted/deleted string $S_i$,
  where $n = \Vert \mathcal{S} \Vert$ and $k = |\mathcal{S}|$.
\end{corollary}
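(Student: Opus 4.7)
}
The plan is to reuse the suffix-tree-based data structure of Theorem~\ref{theo:fully_dynamic} verbatim and only modify the traversal so that short overlaps are never reported. Concretely, for an inserted string $S_i$ I would run the same suffix-link climb from the node representing $S_i$ on $\STree(\mathcal{S})$ (resp.\ $\STree(\rev{\mathcal{S}})$ for $\mathcal{B}_i^{\geq \ell}$), performing the marking and induced-subtree traversal on $\ctrie(\mathcal{S})$ via the $\des$ function as in Section~\ref{sec:forward}, but aborting the climb as soon as the currently visited node $v$ satisfies $|\str(v)| < \ell$. This mirrors the modification used in Corollary~\ref{coro:static_length}, and it discards exactly the pairs $\langle S_i, S_j \rangle$ whose overlap has length smaller than $\ell$, so correctness follows directly from the correctness proofs of Theorem~\ref{theo:dynamic} and Theorem~\ref{theo:fully_dynamic}.

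For deletions I would not modify the structural maintenance at all: the leaf-deletion procedure on $\STree(\mathcal{S})$ and $\STree(\rev{\mathcal{S}})$, together with the leaf-pointer technique of~\cite{LeonardIMB2024} for updating edge labels, still runs in $O(|S_i|\log\sigma)$ time per deletion independently of $\ell$. To remove the contributions of $S_i$ from the output list $\mathcal{L}_{\mathcal{S}}^{\geq \ell}$, I keep the same pair of dynamic arrays $F$ and $B$ used in Theorem~\ref{theo:fully_dynamic}, where $F[i]$ and $B[i]$ now store only those $\lspo$ entries whose length is at least $\ell$; deletion of $S_i$ still amounts to discarding $F[i]$ and $B[i]$ and then applying the doubling/halving rebuild, which remains $O(k)$ amortized.

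Putting these together yields the claimed $O(|S_i|\log\sigma + k)$ amortized bound per insertion or deletion, and the space remains $O(n)$ since the only stored objects are the two suffix trees, the compact tries, the $\des$ pointers, and the two arrays $F,B$ of size $O(k)$. The only nontrivial point to check is that the suffix-link climb may now stop early, so the marking scheme must still be consistent across iterations; this is fine because at the end of processing $S_i$ we unmark exactly the nodes we marked, just as in Algorithm~1, and the early abort only reduces the set of marks made. Hence the main obstacle is essentially cosmetic: verifying that truncating the climb at depth $\ell$ interacts correctly with the induced-subtree marking invariant, which it does for precisely the reason already established in Corollary~\ref{coro:static_length}. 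I would therefore state the proof as ``apply the same modification as in Corollary~\ref{coro:static_length} to the traversal of Theorem~\ref{theo:fully_dynamic}; correctness and complexity follow.''
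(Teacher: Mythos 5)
Your proposal is correct and matches the paper's intended argument: the paper gives no separate proof for this corollary, deriving it from Theorem~\ref{theo:fully_dynamic} exactly as Corollary~\ref{coro:dynamic_length} is derived from Theorem~\ref{theo:dynamic}, namely by truncating the suffix-link climb once the string depth falls below the threshold $\ell$ while leaving the deletion machinery and the arrays $F,B$ untouched. Your observation that the early abort only shrinks the set of marked nodes, and hence preserves the marking invariant, is precisely the point that makes the reduction to Corollary~\ref{coro:static_length} go through.
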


\section{Conclusions and Future Work}

In this paper, we presented an efficient algorithm
for the APSP problem for a dynamic set of strings
$\mathcal{S} = \{S_1, \ldots, S_k\}$ of total length $n$.
Our algorithm occupies $O(n)$ space and takes 
$O(|S_i| \log \sigma + k)$ time for each string $S_i$ inserted or deleted.
The running time is near-optimal except for the $\log \sigma$ factor,
that is indeed required if one maintains $\mathcal{S}$ in a tree or DAG based data structure for a general ordered alphabet of size $\sigma$.


It is interesting to see whether our (fully) dynamic algorithm can be
extended to efficient construction/update of the HOG~\cite{CanovasCR17,Khan21,ParkPCPR21} for a (fully) dynamic set of strings.

\section*{Acknowledgements}
This work was supported by JSPS KAKENHI Grant Number JP23K24808~(SI).

\bibliographystyle{abbrv}
\bibliography{ref}

\end{document}